\tikzstyle{obs} = [circle,fill=white,draw=black,inner sep=1pt,minimum size=20pt,font=\fontsize{10}{10}\selectfont,node distance=1,thick]
\tikzstyle{latent} = [obs,dotted]
\newcommand{\bR}{\ensuremath \mathbb{R}}
\newcommand{\bN}{\ensuremath \mathbb{N}}
\newcommand{\bE}{\ensuremath \mathbb{E}}
\newcommand{\cC}{\ensuremath \mathcal{C}}
\newcommand{\cI}{\ensuremath \mathcal{I}}
\newcommand{\cJ}{\ensuremath \mathcal{J}}
\newcommand{\cN}{\ensuremath \mathcal{N}}
\DeclareMathOperator*{\argmax}{arg\,max}
\DeclareMathOperator*{\argmin}{arg\,min}
\DeclareMathOperator{\E}{\mathbb{E}}
\DeclareMathOperator{\Var}{\mathrm{Var}}
\DeclareMathOperator{\Cov}{\mathrm{Cov}}
\DeclareMathOperator{\indep}{\perp\!\!\!\perp}
\DeclareMathOperator{\dep}{\not \perp\!\!\!\perp}
\DeclareMathOperator{\rk}{\mathrm{rk}}
\newcommand{\B}[1]{\bm{#1}}
\newcommand{\given}{\,|\,}
\DeclareMathOperator{\im}{im}
\newcommand{\ninstruments}{\ensuremath N_{\mathrm{IV}}}
\newcommand{\nperround}{\ensuremath N_{\nicefrac{\mathrm{IV}}{\mathrm{exp}}}}
\newcommand{\nexperiments}{\ensuremath T}
\newcommand{\X}{\ensuremath \B{X}}
\newcommand{\M}{\ensuremath \B{M}}
\newcommand{\U}{\ensuremath \B{U}}
\newcommand{\V}{\ensuremath \B{V}}
\newcommand{\y}{\ensuremath \B{y}}
\newcommand{\Z}{\ensuremath \B{Z}}
\renewcommand{\P}{\ensuremath \B{P}}
\newcommand{\D}{\ensuremath \B{D}}
\newcommand{\betaiv}{\ensuremath \hat{\beta}_{\mathrm{2SLS}}}
\newcommand{\betahat}{\ensuremath \widehat{\|\beta\|}_2}
\newcommand{\dimX}{{d_x}}
\newcommand{\dimZ}{{d_z}}
\newcommand{\cost}{c}
\newcommand{\similarity}{\mathrm{sim}}
\newcommand{\gain}{\mathrm{gain}}
\newcommand{\score}{\mathrm{score}}
\newcommand{\betaivsub}[1]{\ensuremath \widehat{\P_{\alpha_{#1}}\beta}}
\newcommand{\xhdr}[1]{\noindent \textbf{#1.}\;}
\theoremstyle{plain}
\newtheorem{theorem}{Theorem}
\newtheorem{proposition}[theorem]{Proposition}
\newtheorem{corollary}[theorem]{Corollary}
\theoremstyle{definition}
\theoremstyle{remark}
\xpatchcmd{\algorithmic}{\itemsep\z@}{\itemsep=0.5ex plus1pt}{}{}
\icmltitlerunning{\hfill Sequential Underspecified Instrument Selection for Cause-Effect Estimation\hfill \thepage}
\begin{document}

\twocolumn[
\icmltitle{Sequential Underspecified Instrument Selection\\ for Cause-Effect Estimation}

\icmlsetsymbol{equal}{*}

\begin{icmlauthorlist}
\icmlauthor{Elisabeth Ailer}{sch,tum}
\icmlauthor{Jason Hartford}{yyy,comp}
\icmlauthor{Niki Kilbertus}{sch,tum}
\end{icmlauthorlist}

\icmlaffiliation{yyy}{MILA - Quebec Artificial Intelligence Institute, Montreal, Quebec, Canada}
\icmlaffiliation{sch}{HelmholtzAI, Helmholtz Munich, Munich, Germany}
\icmlaffiliation{comp}{Recursion, Montreal Quebec Canada}
\icmlaffiliation{tum}{TUM School of Computation, Information and Technology, Technical University Munich, Munich, Germany}

\icmlcorrespondingauthor{Elisabeth Ailer}{elisabeth.ailer@helmholtz-munich.de}

\icmlkeywords{instrumental variables, underspecified, experiment design, cause-effect, identification}

\vskip 0.3in
]

\printAffiliationsAndNotice{}  %

\begin{abstract}
Instrumental variable (IV) methods are used to estimate causal effects in settings with unobserved confounding, where we cannot directly experiment on the treatment variable. Instruments are variables which only affect the outcome indirectly via the treatment variable(s). Most IV applications focus on low-dimensional treatments and crucially require at least as many instruments as treatments. This assumption is restrictive: in the natural sciences we often seek to infer causal effects of high-dimensional treatments (e.g., the effect of gene expressions or microbiota on health and disease), but can only run few experiments with a limited number of instruments (e.g., drugs or antibiotics). In such underspecified problems, the full treatment effect is not identifiable in a single experiment even in the linear case. We show that one can still reliably recover the projection of the treatment effect onto the instrumented subspace and develop techniques to consistently combine such partial estimates from different sets of instruments. We then leverage our combined estimators in an algorithm that iteratively proposes the most informative instruments at each round of experimentation to maximize the overall information about the full causal effect.
\end{abstract}

\section{Introduction}

\xhdr{Motivation}
Understanding cause-effect relationships in high-dimensional systems is a common challenge in various scientific areas.
For example, how does our gut microbiome (treatment $X$) causally influence health and disease (outcome $Y$)? How does the transcriptome of a cell (treatment $X$) causally influence its function (outcome $Y$)?
Typically, the high-dimensional treatment and the outcome are heavily confounded via unknown mechanisms, rendering the strong assumptions required for cause-effect estimation, i.e., computing $p(y \given do(x))$, from observational data sampled from $p(x, y)$ untenable.
Hence, experimentation is indispensable for the ultimate goal of identifying and estimating these causal relationships.
Whenever we can intervene directly on the treatment, we have direct access to $p(y \given do(x))$ and cause-effect relationships can be captured by mere association in experiments.
We are motivated by two crucial realizations: (a) oftentimes practically feasible experiments do not intervene directly on the treatment $X$ but some other variable $Z$; (b) still, the scientific goal is to estimate the effect of the high-dimensional $X$ on the outcome $Y$ (instead of the effect of the actual intervention on $Z$).

Regarding (a), administering certain antibiotics has a strong and highly predictable effect on the gut microbiome, but it does not break causal links from potentially unobserved confounders to $X$; similarly, applying various drug (dosages) to cell cultures influences the transcriptome, but again does not directly intervene on it.
Even targeted gene knockouts or CRISPR/Cas gene editing \citep{zhang2015crispr, fu2013crispr} do not constitute interventions as defined in causal inference \citep{pearl2009causality}.
They do not strip the microbiome or transcriptome free of any other causal influences, i.e., they may still be confounded with the outcome of interest.
As for (b), ultimately we give a certain antibiotic to learn about \emph{how the microbiome causally influences disease} ($p(y \given do(x))$) and not merely about what overall effect the antibiotic has on disease ($p(y \given do(z))$).

In such settings, the variable experimented on (antibiotics, drugs, gene knockout/edits) can at most serve as an \emph{instrument} $Z$ for the treatment:
\textbf{(A1)} $Z$ (strongly) affects the treatment ($Z \dep X$).
\textbf{(A2)} $Z$ is independent of unobserved confounders $U$ ($Z \indep U$).
\textbf{(A3)} $Z$ ``only affects the outcome via the treatment'' ($Y \indep Z \given \{X, U\}$).
Whether we can ascertain these conditions depends on the setting.
For example, orally administered sub therapeutic dosages of antibiotics (such that no antibiotics can be detected in the bloodstream) may satisfy this condition \citep{ailer2021causal}.
Similarly, genetic interventions (or drugs) may indeed only influence cell function via the transcriptome.

As a result, even with access to experimentation, we must resort to instrumental variable (IV) techniques to estimate the treatment effect.
A number of challenges arise in this setting even in the fully linear case:
\textbf{(B1)} At least as many instruments as treatment variables are required for the causal effect to be identified (just- or overspecified setting).
This is not feasible in our scenarios with (tens of) thousands of organisms/genes and much fewer antibiotics/drugs to experiment with.
\textbf{(B2)} Experimentation is typically slow and expensive, putting natural bounds on how many experiments can be run.
\textbf{(B3)} The cost of an individual experiment depends on the number of randomized instruments and there may be limits to how many instruments can be sensibly combined in one single experiment. For example, too many drugs, gene knockouts, or antibiotics at once may kill or permanently damage the studied organism.

\xhdr{Main goal and contributions} Constrained by (B1)-(B3), we formulate our main goal:
\emph{Can we select a bounded number of instruments in a bounded number of sequential experiments and combine the results for an informative estimate of a high-dimensional causal effect?}\footnote{We highlight that while we call the high-dimensional $X$ the ``treatment'', we experiment (or intervene) on lower-dimensional instruments $Z$.}
In answering this question, we make the following contributions:

\begin{itemize}[leftmargin=*,itemsep=-2pt,topsep=0pt]
  \item We show that in the underspecified linear IV setting, we can estimate the orthogonal projection of the causal effect onto the ``instrumented subspace'' and construct a $\sqrt{n}$-consistent, asymptotically normal estimator.
  \item We combine such estimates from experiments with different instruments to consistently recover the estimate had we randomized all IVs simultaneously (without actually having to apply all perturbations at once).
  \item We develop an algorithm that sequentially proposes subsets of the available instruments to maximally identify the treatment effect from the combined estimate across all experimental rounds.
  The algorithm trades off the information gained from multiple instruments with the increasing cost of including them in a single experiment based on a pre-specified similarity between instruments.
  \item We develop techniques to keep track of which components of $\beta$ have been identified reliably at each round, upper bound the absolute error in unidentified components, and propose a stopping criterion based purely on observational data ($p(x,y)$) that---when reached---guarantees full identification of $\beta$ under mild assumptions.
\end{itemize}

\xhdr{Related work}
We build on the theory of (linear) instrumental variable estimators, with a specific focus on two-stage methods \citep{angrist2008mostly}.
Instrumental variables have been used since 1928 by Philip G.~Wright \citep{wright1928tariff,stock2003retrospectives} and are an essential part of the econometrics toolkit \citep{angrist2008mostly}.
They have also received renewed interest from the machine learning community recently \citep{hartford2017deep,zhang2020maximum,kilbertus2020class,singh2019kernel, bennett2019deep,Padh2022Bounding,saengkyongam2022exploiting,muandet2019dual}.
Despite the difficulty of finding valid instruments for a given target effect in practice \citep{hernan2006instruments}, instrumental variable estimation has been applied successfully in genetics via Mendelian randomization \citep{sanderson2022mendelianrandomization,didelez2007mendelianrandomization} and recently on microbiome data \citep{Sohn2019,Wang2020,ailer2021causal}.

Our work is also related to ideas in experiment design for causal structure learning \citep{hyttinen2013experiment,gamella2020active,sussex2021near,tigas2022interventions}.
Two key differences are that those works focus on sequential selection of \emph{interventions} (not just instruments) and that they seek to identify causal structure (i.e., the causal graph) instead of a specific causal effect.
To the best of our knowledge, there is no literature on adaptively selecting instruments in sequential experiments to identify a causal effect from high-dimensional treatments.

Finally, variants of underspecified instrumental variable settings have been studied by \citet{pfister2022identifiability}.
They assume the causal effect from $X$ on $Y$ to be sparse, which allows them to relax standard identifiability assumptions in the linear IV setting.
\citet{rothenhausler2018anchor} make use of exogenous variables to provide an estimator that interpolates between the ordinary least-squares (OLS) and two stage least-squares (2SLS) estimates, even when IV assumptions are not fully satisfied and point-identification is not guaranteed.
Their proposal can be interpreted as ``choosing the best performing (in terms of mean squared error) estimate among the compatible ones''.
Thus, both works pursue goals orthogonal to ours.

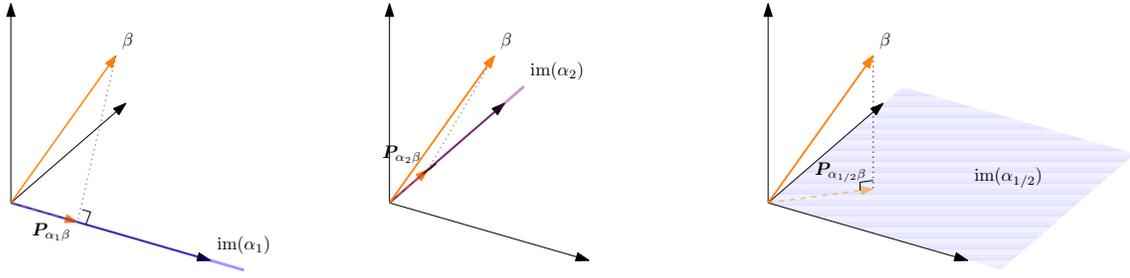
\begin{figure*}
\hfill
\begin{minipage}[b]{0.4\linewidth}
    \centering
    \resizebox{\linewidth}{!}{
        \begin{tikzpicture}[>={Stealth[inset=0pt,length=8pt,angle'=35,round]},tdplot_main_coords,scale=0.75]
        \draw[draw=white,fill=blue,fill opacity=0.0] (-4,-1,-1) -- (-4, -1, 8) -- (5,9,7)-- (5,9,-3) -- cycle;

        \coordinate (O) at (0,0,0);

        \coordinate[label=above right:{$\im(\alpha_1)$}] (A) at (0,6,0);
        \coordinate[label=above left:{$\P_{\alpha_{1}\beta}$}] (B) at (0,2,-0.75);
        \coordinate[label=above right:{$\beta$}] (C) at (2, 2, 4);

        \draw[->] (O) -- (6,0,0);
        \draw[->] (O) -- (0,6,0);
        \draw[->] (O) -- (0,0,6);
    
        \draw[->,draw=orange,style=thick] (O) -- (2, 2, 4);

        \draw[dashed,->,draw=orange,style=thick] (O) -- (0, 2, 0);
        \draw[-,draw=blue,style=ultra thick,opacity=0.4] (O) -- (0, 7, 0);
        \draw[dotted,draw=black,style=thick, opacity=0.4] (2, 2, 4) -- (0, 2, 0);

        \coordinate (A) at (0, 2.25, 0);
        \coordinate (B) at (0.25, 2.25, 0.25);
        \coordinate (C) at (0.25, 2, 0.25);
        \draw[draw=black] (A) -- (B) -- (C);
        \end{tikzpicture}
    }
  \end{minipage}
\hspace{-2cm}
  \begin{minipage}[b]{0.4\linewidth}
    \centering
    \resizebox{\linewidth}{!}{
        \begin{tikzpicture}[>={Stealth[inset=0pt,length=8pt,angle'=35,round]},tdplot_main_coords,scale=.75]
        \coordinate (O) at (0,0,0);
        \draw[draw=white,fill=blue,fill opacity=0.0] (-4,-1,-1) -- (-4, -1, 8) -- (5,9,7)-- (5,9,-3) -- cycle;

        \coordinate[label=above right:{$\beta$}] (C) at (2, 2, 4);
    
        \coordinate[label=above right:{$\im(\alpha_2)$}] (D) at (7,0,0);
        \coordinate[label=above left:{$\P_{\alpha_{2}\beta}$}] (E) at (2,0,0);

        \draw[->] (O) -- (6,0,0);
        \draw[->] (O) -- (0,6,0);
        \draw[->] (O) -- (0,0,6);
    
        \draw[->,draw=orange,style=thick] (O) -- (2, 2, 4);

        \draw[dashed, ->,draw=orange,style=thick] (O) -- (2, 0, 0);
        \draw[-,draw=violet,style=ultra thick,opacity=0.4] (O) -- (7, 0, 0);
        \draw[dotted,draw=black,style=thick, opacity=0.4] (2, 2, 4) -- (2, 0, 0);

        \draw[->,draw=orange,style=thick] (O) -- (2, 2, 4);

        \coordinate (A) at (1.75,0 , 0);
        \coordinate (B) at (1.75, 0.25, 0.25);
        \coordinate (C) at (2.0, 0.25, 0.25);
        \draw[draw=black] (A) -- (B) -- (C);
            
        \end{tikzpicture}
    }
  \end{minipage}
   \hspace{-2cm}
    \begin{minipage}[b]{0.4\linewidth}
    \centering
    \resizebox{\linewidth}{!}{
        \begin{tikzpicture}[>={Stealth[inset=0pt,length=8pt,angle'=35,round]},tdplot_main_coords,scale=.75]
        \coordinate (O) at (0,0,0);
       \draw[draw=white,fill=blue,fill opacity=0.0] (-4,-1,-1) -- (-4, -1, 8) -- (5,9,7)-- (5,9,-3) -- cycle;
        
        \draw[draw=white,fill=blue,fill opacity=0.6,pattern=horizontal lines light blue] (O) -- (7,0,0) -- (7,7,0)-- (0,7,0) -- cycle;

        \coordinate[label=above right:{$\im(\alpha_{1/2})$}] (A) at (0,6,2);
        \coordinate[label=above left:{$\P_{\alpha_{1/2}\beta}$}] (B) at (2,2,0);
        \coordinate[label=above right:{$\beta$}] (C) at (2, 2, 4);

        \draw[->] (O) -- (6,0,0);
        \draw[->] (O) -- (0,6,0);
        \draw[->] (O) -- (0,0,6);
    
        \draw[->,draw=orange,style=thick] (O) -- (2, 2, 4);

        \draw[dotted,draw=black,style=thick, opacity=0.6] (2, 2, 4) -- (2, 2, 0);

        \draw[->,draw=orange,style=thick] (O) -- (2, 2, 4);
        \draw[dashed,->,draw=orange,style=thick,opacity=0.5] (O) -- (2, 2, 0);

        \coordinate (A) at (1.75, 1.75, 0);
        \coordinate (B) at (1.75, 1.75, 0.25);
        \coordinate (C) at (2, 2, 0.25);
        \draw[draw=black] (A) -- (B) -- (C);
        \end{tikzpicture}
    }
  \end{minipage}
  \hfill
  \vspace{-0.5cm}
  \caption{
   The illustration shows the possible experimental settings, including their estimates in a setting with $\dimZ=2, \dimX=3$: (1) The experiments with one of the two instruments respectively, denoted by $\alpha_1$ and $\alpha_2$ and (2) - if feasible - with both instruments at once, i.e. $\alpha_{1/2}$.
   In each figure, the true causal effect $\beta$ (orange) is projected onto the corresponding instrumented spaces, i.e. $\im(\alpha_{1}), \im(\alpha_{2})$ and $\im(\alpha_{1/2})$ (shaded blue) by $\P_{\alpha_{1}} \beta, \P_{\alpha_{2}} \beta$ and $\P_{\alpha_{1/2}} \beta$ (dashed orange). }
\label{fig:graphical_illustration1}
\end{figure*}

\section{Background and Problem Setting}

We aim to estimate the causal effect of treatments $X \in \bR^\dimX$ on a scalar outcome $Y \in \bR$.
There may be unobserved confounding between $X$ and $Y$, but we assume access to valid instruments $Z \in \bR^{\dimZ}$.
We focus on the linear setting
\begin{equation}\label{eq:setting}
  X = Z \alpha + \epsilon_X \:, \qquad
  Y = X \beta + \epsilon_Y\:,
\end{equation}
where $\alpha \in \bR^{\dimZ \times \dimX}$ and $\beta \in \bR^{\dimX}$ represent the linear structural functions. We interpret $\beta$ flexibly as a row or column vector as needed.
Because of unobserved confounding, the noise variables $\epsilon_X, \epsilon_Y$ are typically not independent, but we assume $\bE[\epsilon_X] = \bE[\epsilon_Y] = 0$. 
The standard instrumental variable assumptions (A1)-(A3) become $\alpha \ne 0$, $Z \indep \{\epsilon_X, \epsilon_Y\}$, and $Z \indep Y \given \{X, \epsilon_Y\}$.

Therefore, estimating the causal effect simply corresponds to estimating $\beta$.
The OLS estimator for $\beta$ will be biased, but if $\dimX \leq \dimZ$ and a rank condition on the covariance of $Z$ and $X$ is satisfied, $\beta$ is point-identified and can be estimated consistently, for example via the standard 2SLS estimator \citep{angrist2008mostly}.
When collecting $n$ i.i.d.\ observations in matrices $\X \in \bR^{n \times \dimX}$, $\Z \in \bR^{n \times \dimZ}$, and $\y \in \bR^n$, the 2SLS estimator for $\beta$ is given by
\begin{equation}\label{eq:2sls_estimate}
   \betaiv = (\X^T \P_{\Z} \X)^{-1} \X^T \P_{\Z} \y \:,
\end{equation}
where $\P_{\Z} = \Z (\Z^T \Z)^{-1} \Z^T$ is a projection matrix.
The 2SLS estimator can be viewed as (a) regressing $X$ on $Z$, and (b) regressing $Y$ on the predicted $X$ values from the first-stage regression.
Intuitively, the influence of $\epsilon_X$ on $X$ is ``regressed out'' in the first stage, leaving only the direct causal effect of $X$ on $Y$ in the second stage.

Following traditional application settings, most proposed IV methods assume the \emph{just-identified} setting $\dimX = \dimZ$ (as well as the rank condition on $\mathrm{cov}(X, Z)$, which translates into a full rank condition on $\alpha$).
Typically, these methods can also accommodate the \emph{overidentified} (or \emph{overspecified}) setting $\dimZ > \dimX$.
Having many instruments available in the overspecified setting can also be exploited for consistent estimators under relaxed assumptions such as correlated or weak instruments \citep{hausman2005validIV,kang2016instrumental}.
In contrast, the \emph{underidentified} (or \emph{underspecified}) case $\dimZ < \dimX$, where there are fewer instruments than treatments, has received little attention in the literature.
In this case, since $\rk(\P_{\Z}) \le \dimZ$, we have that $\rk(\X^T \P_{\Z} \X) \le \dimZ < \dimX$ and consequently cannot take the inverse in \cref{eq:2sls_estimate}.
More generally, in this situation the causal effect $\beta$ is not fully identified.
However, the available instruments still constrain the possible values of $\beta$.
We will later estimate and exploit those confines.

In our setting, we assume access to a fixed number of $\ninstruments \in \bN$ instruments (e.g., available antibiotics, or drugs) and---in high-dimensional treatment settings---typically have $\ninstruments < \dimX$.
Because we will also consider subsets of instruments, we generically denote by $\dimZ \in [\ninstruments] := \{1, \ldots, \ninstruments\}$ the number of IVs in a given estimation (or experiment).
Further, we assume experimental access to the instruments in that we can run experiments in which a chosen set of instruments is randomized (e.g., in mouse studies or on cell cultures).
Since the specific distribution of $Z$ does not affect identifiability, we assume without loss of generality that the components of $Z$ all independently follow a Rademacher distribution.
This can be interpreted as whether a drug or antibiotic is applied.\footnote{Other illustrative choices are $\tfrac{1}{2}$-Bernoulli or the uniform distribution on a finite interval representing (standardized) dosage levels. All our results immediately transfer.}
With this choice, $\alpha$ fully characterizes the effect of $Z$ on $X$ and we consequently also call the rows of $\alpha \in \bR^{\ninstruments \times \dimX}$ ``the instruments''.\footnote{Generically, $\alpha \in \bR^{\dimZ \times \dimX}$ with $\dimZ \le \ninstruments$ represents some choice of $\dimZ$ instruments in a given estimation/experiment.}
Since the components of $Z$ are jointly independent, the full rank condition on $\mathrm{cov}(X, Z)$ translates to $\alpha$ having full rank.

To maximally constrain the effect estimate, one would randomize all $\ninstruments$ available instruments simultaneously.
Due to (B3), this is typically infeasible in practice.
We model this via a cost function, which can also incorporate hard constraints such as limiting the number of instruments per experiment to at most $\nperround < \ninstruments$.
For (B2) we limit the total number of possible experiments to $\nexperiments \in \bN$.

We proceed as follows.
In \cref{sec:underidentified}, we first construct a consistent, asymptotically normal estimator for the orthogonal projection of $\beta$ onto the image of $\alpha$ viewed as a linear map $\alpha: \bR^\dimZ \to \bR^\dimX$.
We then establish a method to combine multiple such estimates obtained from (different) subsets of IVs to obtain a consistent estimate for the orthogonal projection of $\beta$ on the linear subspace spanned by all instruments combined.
Furthermore, we introduce a method to determine which components of $\beta$ have been successfully identified and a condition that guarantees full identification of $\beta$ under mild assumptions.
In \cref{sec:sequential}, we then leverage all these findings to develop a procedure that proposes subsets of instruments for sequential experimentation to maximally identify $\beta$ under the given constraints.

\Cref{fig:graphical_illustration1} and \cref{fig:graphical_illustration2} illustrate the estimation resp. the combination step and the role of linearity in this context visually.
We suggest to consult these illustrations for intuition when reading the formal statements in the following section.

\begin{figure}
\begin{minipage}[b]{0.8\linewidth}
    \centering
    \resizebox{\linewidth}{!}{
        \begin{tikzpicture}[>={Stealth[inset=0pt,length=8pt,angle'=35,round]},tdplot_main_coords,scale=0.75]
        \coordinate (O) at (0,0,0);
        \draw[draw=white,fill=blue,fill opacity=0.0] (-4,-1,-1) -- (-4, -1, 8) -- (5,9,7)-- (5,9,-3) -- cycle;
        \draw[draw=white,fill=blue,fill opacity=0.6,pattern=horizontal lines light blue] (O) -- (7,0,0) -- (7,7,0)-- (0,7,0) -- cycle;

        \coordinate[label=above right:{$\im(\alpha_{1/2})$}] (A) at (0,6,2);
        \coordinate[label=above left:{$\P_{\alpha_{1/2}\beta}$}] (B) at (2,2,0);
        \coordinate[label=above right:{$\beta$}] (C) at (2, 2, 4);

        \coordinate[label=above left:{$\P_{\alpha_{1}\beta}$}] (B) at (0,2,-0.75);
        \draw[dashed,->,draw=orange,style=thick,opacity=0.4] (O) -- (0, 2, 0);
        \coordinate[label=above left:{$\P_{\alpha_{2}\beta}$}] (E) at (2,0,0);
        \draw[dashed, ->,draw=orange,style=thick,opacity=0.4] (O) -- (2, 0, 0);

        \draw[->] (O) -- (6,0,0);
        \draw[->] (O) -- (0,6,0);
        \draw[->] (O) -- (0,0,6);
    
        \draw[->,draw=orange,style=thick] (O) -- (2, 2, 4);

        \draw[dotted,draw=black,style=thick, opacity=0.6] (2, 2, 4) -- (2, 2, 0);

        \draw[->,draw=orange,style=thick] (O) -- (2, 2, 4);
        \draw[dashed,->,draw=orange,style=thick,opacity=0.5] (O) -- (2, 2, 0);

        \coordinate (A) at (1.75, 1.75, 0);
        \coordinate (B) at (1.75, 1.75, 0.25);
        \coordinate (C) at (2, 2, 0.25);
        \draw[draw=black] (A) -- (B) -- (C);
        \end{tikzpicture}
    }  
        
\end{minipage}
  \begin{minipage}[b]{0.8\linewidth}
    \centering
    \resizebox{\linewidth}{!}{
        \begin{tikzpicture}[>={Stealth[inset=0pt,length=8pt,angle'=35,round]},tdplot_main_coords,scale=0.75]
        \draw[draw=white,fill=blue,fill opacity=0.0] (-4,-1,-1) -- (-4, -1, 8) -- (5,9,7)-- (5,9,-3) -- cycle;
        
        \coordinate (O) at (0,0,0);
        \coordinate[label=above right:{$\im(\alpha_1)$}] (A) at (0,6,0);
        \coordinate[label=above right:{$\beta$}] (C) at (2, 2, 4);
        
        \coordinate[label=above right:{$\im(\alpha_2)$}] (D) at (7,0,0);
        \coordinate[label=above left:{$\P_{\alpha_{[1,2]}\beta}$}] (F) at (1.5,5,0);

        \draw[->] (O) -- (6,0,0);
        \draw[->] (O) -- (0,6,0);
        \draw[->] (O) -- (0,0,6);
    
        \draw[->,draw=orange,style=thick] (O) -- (2, 2, 4);

        \draw[dashed,->,draw=orange,style=thick] (O) -- (0, 2, 0);
        \draw[-,draw=blue,style=ultra thick,opacity=0.4] (O) -- (0, 7, 0);
        \draw[draw=white,fill=blue,fill opacity=0.15] (0,2,0) -- (3, 2, 0) -- (3,2,5)-- (0,2,5) -- cycle;

        \draw[dashed,->,draw=orange,style=thick] (O) -- (2, 0, 0);
        \draw[-,draw=violet,style=ultra thick,opacity=0.4] (O) -- (7, 0, 0);
        \draw[draw=white,fill=violet,fill opacity=0.15] (2,4,0) -- (2, 4, 5) -- (2,0,5)-- (2,0,0) -- cycle;

         \draw[draw=white,fill=blue,fill opacity=0.4] (2,2,0) -- (2, 2, 5) -- (0,2,5)-- (2,0.8,0) -- cycle;

        \draw[dashed,draw=blue,style=thick,opacity=0.7] (2,2,6) -- (2, 2, -1);
        \draw[->,draw=orange,style=thick] (O) -- (2, 2, 4);
        \draw[dotted,draw=black,style=thick, opacity=0.4] (2, 2, 0) -- (0, 2, 0);
        \draw[dotted,draw=black,style=thick, opacity=0.4] (2, 2, 0) -- (2, 0, 0);

        \draw[dashed,->,draw=orange,style=thick] (O) -- (2, 2, 0);

        \coordinate (A) at (1.75, 0, 0);
        \coordinate (B) at (1.75, 0.25, 0);
        \coordinate (C) at (2, 0.25, 0);
        \draw[draw=black] (A) -- (B) -- (C);

        \coordinate (A) at (0, 1.75, 0);
        \coordinate (B) at (0.25, 1.75, 0);
        \coordinate (C) at (0.25, 2, 0);
        \draw[draw=black] (A) -- (B) -- (C);
        \end{tikzpicture}
    }
  \end{minipage}
  \vspace{-0.5cm}
  \caption{The illustration shows a setting with $\dimZ=2, \dimX=3$. It contrasts the single estimation steps with the combination step.
  Upper panel, single estimation: the true causal effect $\beta$ (orange) is projected onto the individual instrumented spaces by $\P_{\alpha_{1}} \beta, \P_{\alpha_{2}} \beta$ and $\P_{\alpha_{1/2}} \beta$.
  Lower panel, combination step: The intersection (dashed blue line) of the orthogonal complements of $\{ \im(\alpha_{1}), \im(\alpha_{2})\}$ represents all vectors that are compatible with both individual estimates. \Cref{eq:underiv_combinedestimate_optimizationprob} then selects the minimum norm point on that line.
  In the illustration it comes to show that the combination of individual estimators $\P_{\alpha_{[1,2]}}\beta$ recovers the effect of the single estimation with both instruments $\P_{\alpha_{1/2}}\beta$.}  \label{fig:graphical_illustration2}
\end{figure}
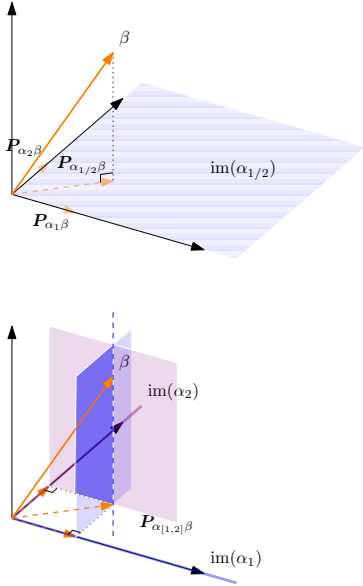

\section{Underidentified IV Estimates}
\label{sec:underidentified}

\subsection{Estimator for a Single Experiment}

Consider a set of instruments $\alpha \in \bR^{\dimZ \times \dimX}$ with $\dimZ < \dimX$ and $\rk(\alpha) = \dimZ$.
The predicted treatment values via OLS $\hat{\X} = \Z (\Z^T \Z)^{-1} \Z^T \X = \P_{\Z} \X \in \bR^{n \times \dimX}$ are confined to a proper linear subspace of $\bR^{\dimX}$, rendering $\X^T P_{\Z} \X$ singular and $\betaiv$ in \cref{eq:2sls_estimate} ill-defined in the underspecified setting.
With $\hat{\alpha}$ being the first-stage OLS estimate, we have $\hat{\X}_i = \P_{\hat{\alpha}} \X_i \in \im(\hat{\alpha}) := \{ \hat{\alpha} z \given z \in \bR^{\dimZ}\} \subset \bR^{\dimX}$ for all $i \in [n]$, where $\P_{\hat{\alpha}} = \hat{\alpha}^T (\hat{\alpha} \hat{\alpha}^T)^{-1} \hat{\alpha} \in \bR^{\dimX \times \dimX}$ denotes the orthogonal projection onto $\im(\hat{\alpha})$.
That is, all first-stage predictions lie in the image of $\hat{\alpha}$ and we can consider the second stage as a mapping $\bR^{\dimX} \supset \im(\hat{\alpha}) \to \bR$.
We call $\im(\hat{\alpha})$ the \emph{instrumented subspace} (in this experiment).
Intuitively, while the low-dimensional $Z$ cannot ``shake'' or ``instrument'' the entire treatment space to identify $\beta$, it still ``instruments'' a non-trivial subspace, inducing non-trivial constraints on $\beta$.
In particular, one may expect to recover ``the part of $\beta$ within the instrumented subspace''.
The following statement formalizes this intuition.

\begin{proposition}\label{prop:identifyprojection}
  Let $\alpha \in \bR^{\dimZ \times \dimX}$ have full rank and assume we have i.i.d.\ data $\Z, \X, \y$ from an experiment in which all $\dimZ$ instruments have been randomized.
  Then
  \begin{align}\label{eq:betaivsub}
      \betaivsub{} &:= (\X^T \P_{\Z} \X)^+ \X^T \P_{\Z} \y \overset{d}{\longrightarrow} \cN(\P_\alpha \beta, \Sigma)\\
      \text{with }\: 
      \Sigma &:= \tfrac{1}{n} \alpha^+ \Sigma_Z^{-1} (\alpha^T)^+ \Var[\epsilon_Y] {\color{gray} {}= \tfrac{1}{n} (\alpha^T \alpha)^+ \Var[\epsilon_Y]} \:,\nonumber
\end{align}
  where $(\cdot)^{+}$ denotes the Moore-Penrose pseudoinverse, $\Sigma_Z$ is the covariance matrix of $Z$, and the simple form of $\Sigma$ (in gray) applies when $Z$ are i.i.d.\ Rademacher variables.
\end{proposition}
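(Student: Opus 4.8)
The plan is to substitute the structural equation $\y=\X\beta+\epsilon_Y$ into the estimator and split it into a ``signal'' and a ``noise'' term,
\[
  \betaivsub{} = (\X^T\P_\Z\X)^+(\X^T\P_\Z\X)\,\beta + (\X^T\P_\Z\X)^+\X^T\P_\Z\epsilon_Y,
\]
analyzing the two separately: the first should deliver the mean $\P_\alpha\beta$, the second the Gaussian fluctuation with covariance $\Sigma$. Throughout I would rescale every Gram matrix by $1/n$ and invoke the law of large numbers for the i.i.d.\ sample moments, $\tfrac1n\Z^T\Z\to\Sigma_Z$ and, using $\X=\Z\alpha+\epsilon_X$ with $\Z\indep\epsilon_X$, $\tfrac1n\Z^T\X\to\Sigma_Z\alpha$. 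Consequently $\tfrac1n\X^T\P_\Z\X=(\tfrac1n\X^T\Z)(\tfrac1n\Z^T\Z)^{-1}(\tfrac1n\Z^T\X)\to A:=\alpha^T\Sigma_Z\alpha$, a matrix of rank $\dimZ$ because $\alpha$ has full row rank and $\Sigma_Z$ is nonsingular.

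For the signal term I would use continuity of the Moore--Penrose inverse to pass the limit through $(\cdot)^+$, giving $(\tfrac1n\X^T\P_\Z\X)^+\to A^+$ and hence $(\tfrac1n\X^T\P_\Z\X)^+(\tfrac1n\X^T\P_\Z\X)\to A^+A=\P_\alpha$, the orthogonal projection onto $\im(\alpha)=\im(A)$, so the first term converges to $\P_\alpha\beta$. This is the step I expect to be the main obstacle: the pseudoinverse is discontinuous precisely where the rank jumps, so the argument hinges on a rank-stability statement --- that $\rk(\X^T\P_\Z\X)=\dimZ=\rk(A)$ with probability tending to one --- which is exactly what the full-rank assumption on $\alpha$ buys. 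I would make this explicit (e.g.\ via continuity of the $\dimZ$-th singular value) so that $(\cdot)^+$ is continuous at $A$ and the interchange of limit and pseudoinverse is justified.

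For the noise term I would apply the multivariate CLT, $\tfrac1{\sqrt n}\Z^T\epsilon_Y\overset{d}{\to}\cN(0,\Sigma_Z\Var[\epsilon_Y])$, which is legitimate since $\Z\indep\epsilon_Y$ and $\bE[\epsilon_Y]=0$. Writing $\X^T\P_\Z\epsilon_Y=\alpha^T\Z^T\epsilon_Y+\epsilon_X^T\P_\Z\epsilon_Y$ and noting the second summand is only $O_p(1)$, hence negligible at the $\sqrt n$ scale, Slutsky's theorem together with $(\tfrac1n\X^T\P_\Z\X)^+\to A^+$ yields
\[
  \sqrt n\,(\X^T\P_\Z\X)^+\X^T\P_\Z\epsilon_Y \overset{d}{\longrightarrow} A^+\alpha^T\,\cN(0,\Sigma_Z\Var[\epsilon_Y]).
\]
The limiting covariance is $A^+\alpha^T\Sigma_Z\alpha\,A^+\,\Var[\epsilon_Y]=A^+AA^+\Var[\epsilon_Y]=A^+\Var[\epsilon_Y]$, using symmetry of $A^+$ and the identity $A^+AA^+=A^+$. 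Finally I would rewrite $A^+=(\alpha^T\Sigma_Z\alpha)^+=\alpha^+\Sigma_Z^{-1}(\alpha^T)^+$ --- valid by the reverse-order law, since $\alpha^T$ has full column rank, $\alpha$ full row rank, and $\Sigma_Z$ is invertible --- to obtain $\Sigma=\tfrac1n\alpha^+\Sigma_Z^{-1}(\alpha^T)^+\Var[\epsilon_Y]$, specializing to $\Sigma_Z=I$ for the Rademacher form $\tfrac1n(\alpha^T\alpha)^+\Var[\epsilon_Y]$. Combining the signal and noise terms by Slutsky gives $\betaivsub{}\overset{d}{\to}\cN(\P_\alpha\beta,\Sigma)$.

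I would also flag one conceptual point that is crucial for the clean formula to hold: the stated $\Sigma$ retains only the second-stage ($\epsilon_Y$) randomness, i.e.\ it is the variance conditional on the realized design. Treating the recovered projection direction at its probability limit $\P_\alpha$ discards the first-stage estimation error, which would otherwise contribute an additional term depending on $\epsilon_X$ and on the component $(I-\P_\alpha)\beta$ orthogonal to the instrumented subspace. I would state this convention explicitly, mirroring the standard treatment of 2SLS asymptotic variance, since it is what isolates the $\Var[\epsilon_Y]$-only expression claimed in the proposition.
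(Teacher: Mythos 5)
Your proposal is correct and reaches the stated conclusion, but by a technically different route than the paper. Both arguments begin with the same signal/noise split from substituting $\y=\X\beta+\epsilon_Y$, but the paper then carries out the finite-sample algebra exactly via the SVD $\hat{\X}=\U\D\V^T$, arriving at the identity $\betaivsub{}=\V_{\hat{\alpha}}\V_{\hat{\alpha}}^T\beta+\V_{\hat{\alpha}}\D_{\hat{\alpha}}^{-1}\U_{\hat{\alpha}}^T\epsilon_Y$ and then appealing to the $\sqrt{n}$-consistency of the estimated singular vectors and values to pass to the limit, reading off the covariance as $\hat{\X}^+(\hat{\X}^T)^+\Var[\epsilon_Y]=\hat{\alpha}^+(\Z^T\Z)^{-1}(\hat{\alpha}^T)^+\Var[\epsilon_Y]$. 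You instead work with rescaled sample moments throughout: the law of large numbers gives $\tfrac1n\X^T\P_\Z\X\to\alpha^T\Sigma_Z\alpha$, an explicit rank-stability argument justifies continuity of the pseudoinverse at the limit, the multivariate CLT handles $\tfrac1{\sqrt n}\Z^T\epsilon_Y$, and the reverse-order law converts $(\alpha^T\Sigma_Z\alpha)^+$ into the stated $\alpha^+\Sigma_Z^{-1}(\alpha^T)^+$. What your route buys is self-containedness (no appeal to asymptotics of SVD factors) and an explicit treatment of the one step the paper leaves implicit, namely why the pseudoinverse may be exchanged with the limit despite being discontinuous where the rank drops. Your closing caveat is also well taken and matches what the paper actually does: its covariance computation is $\Cov[\betaivsub{}]=\Cov[\V_{\hat{\alpha}}\D_{\hat{\alpha}}^{-1}\U_{\hat{\alpha}}^T\epsilon_Y]$, i.e.\ it likewise freezes the random projection $\V_{\hat{\alpha}}\V_{\hat{\alpha}}^T\beta$ at its probability limit, so the stated $\Sigma$ is the second-stage (conditional-on-first-stage) covariance; the neglected fluctuation $\P_{\hat{\alpha}}\beta-\P_\alpha\beta$ is genuinely of order $n^{-1/2}$ in the underspecified case, and in fact its first-order term does not vanish even when $\beta\in\im(\alpha)$, so it is not controlled solely by $(I-\P_\alpha)\beta$ as you suggest --- a minor imprecision in an otherwise correct and valuable observation.
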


\begin{proof}
  For $\dimZ \ge \dimX$, the pseudoinverse in \cref{eq:betaivsub} can be replaced with a regular matrix inverse and the result follows from the asymptotic normality of $\betaiv$ in \cref{eq:2sls_estimate} \citep[Sec.\ 4.2.1]{angrist2008mostly}.
  For the underspecified case $\dimZ < \dimX$,
  we start with the singular value decomposition (SVD) $\hat{\X} = \U \D \V^T$, with which $\hat{\X}^T \hat{\X} = \V \D^T \D \V^T$.
  Assuming that the singular values in the rectangular diagonal matrix $\D \in \bR^{n \times \dimX}$ are sorted in non-ascending order, only the first $\dimZ$ entries are nonzero, because $\rk(\hat{\X}) = \dimZ$.
  Accordingly, the first $\dimZ$ rows of $\V$ form an orthonormal basis of $\im(\alpha)$.
  We write $\V_{\hat{\alpha}} \in \bR^{\dimX \times \dimZ}$ for the first $\dimZ$ columns of $\V$, $\D_{\hat{\alpha}} \in \bR^{\dimZ \times \dimZ}$ for the upper left $\dimZ \times \dimZ$ block of $\D$, and $\U_{\hat{\alpha}} \in \bR^{n\times \dimZ}$ for the first $\dimZ$ columns of $\U$.
  With $\hat{\X} = \P_{\Z} \X$ we compute
  \begin{align*}
      \betaivsub{} 
      &= (\X^T \P_{\Z}^T \P_{\Z} \X)^+ \X^T \P_{\Z}^T \P_{\Z} (\X \beta + \epsilon_Y)  \\
      &= (\hat{\X}^T \hat{\X})^+ (\hat{\X}^T\hat{\X} \beta + \hat{\X}^T \epsilon_Y) \\
      &= \V (\D^T \D)^+ \D^T\D \V^T \beta + \V (\D^T)^+ \U^T \epsilon_Y \\
      &=  \V_{\hat{\alpha}} \V_{\hat{\alpha}}^T \beta + \V_{\hat{\alpha}} \D_{\hat{\alpha}}^{-1} \U_{\hat{\alpha}}^T \epsilon_Y\:,
\end{align*}
  where we have used 
  \begin{equation*}
    (\D^T \D)^+ \D^T\D = \begin{pmatrix}
         \bm{I}_{\dimZ \times \dimZ}& \bm{0}_{\dimZ \times (\dimX - \dimZ)} \\
         \bm{0}_{(\dimX - \dimZ) \times \dimZ} & \bm{0}_{(\dimX - \dimZ) \times (\dimX - \dimZ)}
    \end{pmatrix}\:.
  \end{equation*}
  From the asymptotic normality of the first-stage OLS estimate $\hat{\alpha}$, it follows that $\V_{\hat{\alpha}}$ and $\D_{\hat{\alpha}}$ are $\sqrt{n}$-consistent estimators for $\V_{\alpha}$ and $\D_{\alpha}$ respectively \citep{bura2008distribution}.
  Therefore, $\P_{\hat{\alpha}} = \V_{\hat{\alpha}} \V_{\hat{\alpha}}^T$ converges in probability to $\P_{\alpha}$ in the large sample limit.
  With $\E[\epsilon_Y] = 0$ we thus have
  \begin{equation*}
      \E[\betaivsub{} ] = \E[\V_{\hat{\alpha}} \V_{\hat{\alpha}}^T] \beta \overset{p}{\longrightarrow} \P_\alpha \beta \:.
  \end{equation*}
  Similarly, with $\hat{\X} = \Z \hat{\alpha}$ we compute the covariance
  \begin{align*}
      \Cov[\betaivsub{}] 
      &= \Cov[\V_{\hat{\alpha}} \D_{\hat{\alpha}}^{-1} \U_{\hat{\alpha}}^T \epsilon_Y]
      = \hat{\X}^+ (\hat{\X}^T)^+ \Var[\epsilon_Y] \\
      &= \hat{\alpha}^+ (\Z^T \Z)^{-1} (\hat{\alpha}^T)^+ \Var[\epsilon_Y]\:.
  \end{align*}
  Since the matrix inverse is continuous (on invertible matrices), the continuous mapping theorem yields that $\hat{\Sigma}_{Z}^{-1}$ consistently estimates $\Sigma_Z^{-1}$.
  Again using the asymptotic normality of $\hat{\alpha}$, we thus have (for centered) $\Cov[\betaivsub{}] \overset{p}{\longrightarrow} \tfrac{1}{n} \alpha^+ \Sigma_Z^{-1} (\alpha^T)^+ \Var[\epsilon_Y]$.
  For i.i.d.\ Rademacher instruments with covariance matrix $\Sigma_Z = \bm{I}_{\dimZ \times \dimZ}$ we have $\Cov[\betaivsub{}] \overset{p}{\longrightarrow} \tfrac{1}{n} (\alpha^T \alpha)^+$.
\end{proof}

\Cref{prop:identifyprojection} substantiates the first intuition that in the underidentified setting we can still consistently estimate the orthogonal projection of $\beta$ onto the instrumented subspace.

We refer to \cref{fig:graphical_illustration1} for an illustration of the estimation in a linear setting with $\dimZ=2, \dimX=3, \nexperiments=2$.
For visualization purposes, we choose axis-aligned $\alpha_{1} = (1, 0, 0)$ and $\alpha_{2} = (0, 1, 0)$.
For both instruments, we estimate the projection of $\beta$ onto the respective instrumented spaces ($P_{\alpha_{1}}\beta$ and $P_{\alpha_{2}}\beta$). The illustration on the right hand side also includes the estimation which uses both instruments ($\alpha_{1/2}$) at once. In the following, this is the effect we want to recover solely based on the individual estimates.

\subsection{Combined Estimator}

In sequential experiments, we select pairwise disjoint instruments $\alpha_1, \alpha_2, \ldots, \alpha_{\nexperiments}$, where each $\alpha_i$ is a subset of the $\ninstruments$ available ones.
Each corresponding individual estimator $\betaivsub{1}, \betaivsub{2}, \ldots, \betaivsub{\nexperiments}$ estimates an orthogonal projection of $\beta$ onto the linear subspace $\im(\hat{\alpha}_i)$.
Our goal is to reconstruct the estimate we would have obtained, had we used all instruments in the sequential experiments at once in a single one.
Denoting the union (or concatenation) of all available instruments by a single matrix $\alpha_{[\nexperiments]}$, we aim to reconstruct $\P_{\alpha_{[\nexperiments]}} \beta$ from the individual $\betaivsub{i}$.\footnote{We use similar notation $\hat{\alpha}_{[t]}$ for the concatenation of the instruments (as matrices) used up until and including round $t$.}
In other words, we are looking for the least-square vector compatible with all projections
\begin{equation}  \label{eq:underiv_combinedestimate_optimizationprob}
    \min_{\gamma \in \bR^\dimX}\|\gamma \|_2^2 \: \text{ s.t.} \:\:  \betaivsub{i} = \V_{\hat{\alpha}_i} \V_{\hat{\alpha}_i}^T \gamma \: \text{ for all } i \in [\nexperiments]\:.
\end{equation}

\begin{proposition}\label{prop:combinedestimate}
  Let $\alpha \in \bR^{\ninstruments \times \dimX}$ have full rank and let $(\Z_1, \X_1, \y_1), \dots, (\Z_{\nexperiments}, \X_{\nexperiments}, \y_{\nexperiments})$ be i.i.d.\ datasets from $\nexperiments$ experiments with disjoint subsets $\alpha_1, \ldots, \alpha_{\nexperiments}$ of randomized instruments. 
  Then the solution of \cref{eq:underiv_combinedestimate_optimizationprob} is a consistent estimator for $\P_{\alpha_{[\nexperiments]}} \beta$.
\end{proposition}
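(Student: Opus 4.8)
The plan is to solve the minimum-norm program \cref{eq:underiv_combinedestimate_optimizationprob} essentially in closed form and then invoke the continuous mapping theorem, with the crucial structural input being that the instrumented subspaces sit in \emph{direct sum}. First I would note that each constraint $\betaivsub{i} = \V_{\hat\alpha_i}\V_{\hat\alpha_i}^T\gamma = \P_{\hat\alpha_i}\gamma$ merely fixes the orthogonal projection of $\gamma$ onto $\hat W_i := \im(\hat\alpha_i)$. Because $\alpha$ has full rank and the $\alpha_i$ are disjoint subsets of its rows, the rows involved are linearly independent, so the subspaces $W_i := \im(\alpha_i)$ are linearly independent and $\im(\alpha_{[\nexperiments]}) = \bigoplus_i W_i =: V$ with $\dim V = \sum_i \rk(\alpha_i)$. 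Since $\hat\alpha_i \overset{p}{\longrightarrow}\alpha_i$ (asymptotic normality of the first stage, as used in \Cref{prop:identifyprojection}) and linear independence is an open condition, the estimated subspaces $\hat W_i$ remain in direct sum and $\hat V := \sum_i \hat W_i$ keeps the dimension of $V$ with probability tending to one.

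Next I would exploit this direct-sum structure to establish feasibility and identify the minimizer. Decomposing $\gamma = \gamma_{\hat V} + \gamma_\perp$ along $\bR^\dimX = \hat V \oplus \hat V^\perp$, every $\P_{\hat\alpha_i}$ annihilates $\hat V^\perp \subseteq \hat W_i^\perp$, so the constraints depend only on $\gamma_{\hat V}$. The linear map $\Phi:\hat V \to \prod_i \hat W_i$, $\eta\mapsto(\P_{\hat\alpha_i}\eta)_i$, is between spaces of equal dimension and injective (its kernel is $\hat V\cap\hat V^\perp=\{0\}$), hence bijective; and each $\betaivsub{i}$ lies in $\hat W_i$ by construction (it sits in the column space of $\V_{\hat\alpha_i}$, cf.\ the proof of \Cref{prop:identifyprojection}), so the tuple $(\betaivsub{i})_i\in\prod_i\hat W_i$ has a unique preimage. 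The feasible set is thus the nonempty affine space $\Phi^{-1}\big((\betaivsub{i})_i\big)+\hat V^\perp$, and its minimum-norm element is the unique $\gamma^\star\in\hat V$ with $\P_{\hat\alpha_i}\gamma^\star=\betaivsub{i}$ for all $i$. Equivalently, stacking the constraints as $\B{A}\gamma=\B{b}$ with $\B{A}=(\P_{\hat\alpha_1}^T,\dots,\P_{\hat\alpha_\nexperiments}^T)^T$ and $\B{b}=(\betaivsub{1}^T,\dots,\betaivsub{\nexperiments}^T)^T$, we have $\gamma^\star=\B{A}^+\B{b}$.

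I would then identify the population target. In the limit $\P_{\hat\alpha_i}\overset{p}{\longrightarrow}\P_{\alpha_i}$ and, by \Cref{prop:identifyprojection}, $\betaivsub{i}\overset{p}{\longrightarrow}\P_{\alpha_i}\beta$, so the limiting constraints read $\P_{\alpha_i}\gamma=\P_{\alpha_i}\beta$ for all $i$, i.e.\ $\gamma-\beta\in\bigcap_i W_i^\perp=\big(\sum_i W_i\big)^\perp=V^\perp$. This is feasible (it contains $\beta$), so the limiting feasible set is $\beta+V^\perp$, whose minimum-norm element is exactly $\P_V\beta=\P_{\alpha_{[\nexperiments]}}\beta$: writing $\beta=\P_V\beta+\P_{V^\perp}\beta$, the point $\P_V\beta$ lies in $\beta+V^\perp$ and is orthogonal to $V^\perp$, hence attains the smallest norm on that affine set. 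Consequently $\B{A}_\infty^+\B{b}_\infty=\P_{\alpha_{[\nexperiments]}}\beta$.

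Finally I would conclude by continuity. The map $(\B{A},\B{b})\mapsto\B{A}^+\B{b}$ is continuous at any point where $\rk(\B{A})$ is locally constant, and here $\rk(\B{A})=\dim(\sum_i\hat W_i)=\dim\hat V$, which equals $\dim V=\rk(\B{A}_\infty)$ with probability tending to one by the direct-sum argument above; the continuous mapping theorem then yields $\gamma^\star=\B{A}^+\B{b}\overset{p}{\longrightarrow}\B{A}_\infty^+\B{b}_\infty=\P_{\alpha_{[\nexperiments]}}\beta$. I expect the main obstacle to be exactly this rank-stability step: the pseudoinverse is discontinuous precisely where the instrumented subspaces fail to be in direct sum or some $\hat\alpha_i$ drops rank, so the argument must lean on the full rank of $\alpha$ and the disjointness of the $\alpha_i$ to guarantee that $\hat V$ retains its limiting dimension asymptotically. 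Everything else—feasibility and the closed form $\B{A}^+\B{b}$—then follows from the elementary direct-sum decomposition.
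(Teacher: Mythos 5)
Your proposal is correct and follows essentially the same route as the paper: reduce the constrained minimum-norm problem to the closed form $\bm{A}^+\bm{b}$ for the stacked system and then pass to the limit using consistency of the first stage and of the $\betaivsub{i}$. You additionally make rigorous two points the paper's terse proof glosses over --- exact finite-sample feasibility of the constraints (each $\betaivsub{i}$ lies in $\im(\hat\alpha_i)$ by construction, and the direct-sum structure makes the stacked system consistent) and the discontinuity of the pseudoinverse, which you handle via the rank-stability argument that $\dim\big(\sum_i \im(\hat\alpha_i)\big)$ equals its limiting value with probability tending to one --- so your write-up is a strictly more careful version of the same argument.
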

\begin{proof}
  Let $\bm{A} := \V_{\hat{\alpha}_{[\nexperiments]}} \V_{\hat{\alpha}_{[\nexperiments]}}^T =  (\V_{\hat{\alpha}_1} \V_{\hat{\alpha}_1}^T | \ldots | \V_{\hat{\alpha}_{\nexperiments}} \V_{\hat{\alpha}_{\nexperiments}}^T) \in \bR^{(\nexperiments \dimX) \times \dimX}$ and $\bm{b} := (\betaivsub{1}, \ldots, \betaivsub{\nexperiments}) \in \bR^{\nexperiments \dimX}$, where ``$|$'' denotes concatenation.
  Then the solution to \cref{eq:underiv_combinedestimate_optimizationprob} is simply $\bm{A}^+ \bm{b}$,  i.e., the least-squares solution to $\bm{A} \gamma = \bm{b}$.
  Hence the optimal $\gamma$ is the orthogonal projection of $\bm{b}$ onto $\im(\bm{A})$, which by construction is just $\im(\hat{\alpha}_{[\nexperiments]})$.
  By the consistency of $\hat{\alpha}_i$ (and thus the consistency of $\bm{A}$ and $\bm{b}$ for their respective expressions without hats as in the proof of \cref{prop:identifyprojection}), asymptotically $\bm{A}^+ \bm{b} \overset{p}{\longrightarrow} \P_{\alpha_{[\nexperiments]}} \beta$.
\end{proof}

Note that the instrument sets need not be disjoint for the proof of \cref{prop:combinedestimate}.
However, since we do not gain further information by including the same instrument in two distinct experiments, which would instrument the same subspace twice, it is reasonable to keep instrument sets distinct across experiments for efficiency.
The least-squares problem with linear equality constraints in \cref{eq:underiv_combinedestimate_optimizationprob} can be solved efficiently (in high dimensions and for many constraints) by simply solving a linear system \citep[Sec.\ 16.1]{nocedal2006numerical}.
In practice, we keep a running estimate $\betaivsub{[t]}$, which is the combined estimate for all experiments up to and including round $t \in [\nexperiments]$.
Being able to combine estimates from individual sets of instruments ``as if we had run a single experiment using the union of instruments at once'' forms the basis for leveraging sequential experiments with different sets of randomized instruments to optimally constrain the overall causal effect in \cref{sec:sequential}.

\subsection{Full Identification and Identified Components}

\xhdr{Full identification}
As an interesting special case of \cref{prop:identifyprojection}, we note that regardless of $\dimX$, a single instrument $\dimZ = 1$ in principle suffices to identify $\beta \in \bR^{\dimX}$, namely when $\alpha \in \bR^{\dimX \times 1}$ is parallel to $\beta$ (viewed as vectors in $\bR^\dimX$), which implies $\P_{\alpha} \beta = \beta$.
Of course, this is unlikely to ``happen accidentally'' in practice.
However, it highlights that if we had full control over the instruments, we could fully identify $\beta$ by crafting $\alpha \in \bR^{\dimX \times 1}$ such as to maximize the $2$-norm of the resulting estimated $\betaivsub{}$.
This can be seen from the fact that $\|\P_{\alpha} \beta \|_2 \le \|\beta\|_2$ for any $\alpha$ with equality implying $\beta \in \im(\alpha)$.
Therefore, when running sequential experiments, the $2$-norm of the combined estimate is a proxy for whether we are still gaining relevant information.
We summarize these results in the following Corollary.
\begin{corollary}\label{cor:recoverbeta}
 In the setting of \cref{prop:identifyprojection}, the following are equivalent: (i) $\beta \in \im(\alpha)$; (ii) $\| \betaivsub{} \|_2 \longrightarrow \| \beta\|_2$ as $n \to \infty$; (iii) $\betaivsub{}$ consistently estimates $\beta$ (not just $\P_{\alpha} \beta$).
\end{corollary}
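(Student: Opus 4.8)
The plan is to route all three equivalences through the single convergence fact already supplied by \cref{prop:identifyprojection}: the estimator is consistent for the projected effect, i.e.\ $\betaivsub{} \overset{p}{\longrightarrow} \P_\alpha \beta$, which follows because the limiting Gaussian has mean $\P_\alpha\beta$ and covariance $\Sigma$ that vanishes at rate $1/n$ (convergence in distribution to a constant gives convergence in probability). Everything then reduces to two elementary ingredients — uniqueness of limits in probability and the Pythagorean identity for orthogonal projections — so no further heavy computation is needed. Rather than proving a cycle, I would establish (i)$\Leftrightarrow$(iii) and (i)$\Leftrightarrow$(ii) separately, since both share the same engine.

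First I would record the geometric lemma to be reused. Since $\P_\alpha$ is the orthogonal projection onto $\im(\alpha)$, we have $\|\beta\|_2^2 = \|\P_\alpha\beta\|_2^2 + \|(\bm{I}-\P_\alpha)\beta\|_2^2$, so $\|\P_\alpha\beta\|_2 = \|\beta\|_2$ holds exactly when $(\bm{I}-\P_\alpha)\beta = 0$, i.e.\ exactly when $\beta \in \im(\alpha)$; and in that same case $\P_\alpha\beta = \beta$. This makes precise the inequality $\|\P_\alpha\beta\|_2 \le \|\beta\|_2$ with its equality condition stated just before the corollary.

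For (i)$\Leftrightarrow$(iii): the estimator always satisfies $\betaivsub{} \overset{p}{\longrightarrow} \P_\alpha\beta$, while consistency for $\beta$ means $\betaivsub{} \overset{p}{\longrightarrow} \beta$. By uniqueness of probabilistic limits these two limits coincide iff $\P_\alpha\beta = \beta$, which by the lemma is iff $\beta \in \im(\alpha)$. For (i)$\Leftrightarrow$(ii): applying the continuous mapping theorem to the continuous map $v \mapsto \|v\|_2$ gives $\|\betaivsub{}\|_2 \overset{p}{\longrightarrow} \|\P_\alpha\beta\|_2$; again by uniqueness of limits, condition (ii) (read as $\|\betaivsub{}\|_2 \overset{p}{\longrightarrow} \|\beta\|_2$) holds iff $\|\P_\alpha\beta\|_2 = \|\beta\|_2$, which by the lemma is iff $\beta \in \im(\alpha)$.

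The hard part — or rather the only place a careful write-up must be explicit — is the direction (ii)$\Rightarrow$(i): convergence of the \emph{norm} of the estimator to $\|\beta\|_2$ is a priori weaker than full consistency, so one cannot read off $\beta \in \im(\alpha)$ from (ii) in isolation. The resolution is that we never use the norm in isolation: the proposition already pins the full vector limit to $\P_\alpha\beta$, hence the norm's limit is forced to be $\|\P_\alpha\beta\|_2$, and matching this to $\|\beta\|_2$ triggers the equality case of the projection inequality. Everything else is bookkeeping; the main thing to keep consistent across the three items is the mode of convergence (in probability, arising from the vanishing covariance in \cref{prop:identifyprojection}).
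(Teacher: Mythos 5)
Your proposal is correct and follows essentially the same route as the paper, which states the corollary as a summary of the preceding observation that $\|\P_{\alpha}\beta\|_2 \le \|\beta\|_2$ with equality exactly when $\beta \in \im(\alpha)$, combined with the consistency $\betaivsub{} \overset{p}{\longrightarrow} \P_\alpha\beta$ from \cref{prop:identifyprojection}. Your write-up merely makes explicit the Pythagorean equality case, the continuous mapping step for the norm, and the uniqueness-of-limits argument, all of which the paper leaves implicit.
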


In our underspecified setting, it is generally impossible to determine whether any (and therefore all) of the statements in \cref{cor:recoverbeta} hold true.
However, recently \citet{janzing2018detecting} have shown that under mild assumptions on the confounding model (i.e., on the joint distribution of $\epsilon_X, \epsilon_Y$), one can estimate what they call the \emph{confounding strength}.
Their estimator has been further refined by \citet{rendsburg2022consistent}.
The confounding strength can then be leveraged to estimate $\| \beta \|_2$ itself from purely observational data \citep{janzing2019causal}.
This means that we can consistently estimate $\| \beta \|_2$ from data $(\X, \y)$, where no instruments have been randomized in our setting \citep[step 5 in Alg.\ ConCorr]{janzing2019causal}.\footnote{Since we assume experimental access, we can also collect purely observational $(X, Y)$ data about the system of interest.}
Let us denote this estimator by $\betahat{}$.
We outline the assumptions on the confounding model, which are satisfied in our empirical evaluation, in \cref{sec:experiments}.

Following \cref{cor:recoverbeta} we can use the difference of the $2$-norm of our running estimate $\|\betaivsub{[t]}\|_2$ and $\betahat{}$ as a heuristic for whether we have already fully identified $\beta$, i.e., whether the overall instrumented subspace contains $\beta$.\footnote{Recall that the $2$-norm of an orthogonal projection of $\beta$ is always smaller or equal to the $2$-norm of $\beta$.
This provides additional motivation for why we choose to combine estimates according to \cref{eq:underiv_combinedestimate_optimizationprob} using a minimum $2$-norm objective.}
In future work, with the asymptotic normality of $\betaivsub{}$ in \cref{prop:identifyprojection} and thus an explicit expression for the asymptotic distribution of $\|\betaivsub{}\|_2^2$ \citep{moschopoulos1985distribution} as well as the asymptotic behavior of $\betahat{}$ in \citet{rendsburg2022consistent}, one could derive (asymptotic) confidence intervals for whether $||\betaivsub{[t]}|| = \betahat{}$.
In practice, we propose
\begin{equation}\label{eq:stopping}
    \left|\betahat{} - \|\betaivsub{[t]}\|_2 \right| < \epsilon
\end{equation}
for a fixed tolerance $\epsilon > 0$ as a stopping criterion of our algorithm.
When this condition is reached, we can conclude (for sufficiently large $n$) that $\beta$ is near fully identified, even though we may have used fewer than $\dimX$ instruments.

\xhdr{Identified components}
When the above stopping criterion is not reached within our $\nexperiments$ rounds of experimentation, learning an orthogonal projection of $\beta$ may still be informative in itself.
However, in many situations one is interested in precise values of individual components $\beta_i$ for $i \in [\dimX]$.
Notably, when $\beta \notin \im(\alpha)$, the individual components of $\P_{\alpha}\beta$ are not easily interpretable.
For example, neither holds $(\P_{\alpha} \beta)_i = 0 \Rightarrow \beta_i = 0$ nor the other way round. 
Therefore, we now devise a method to determine when we can trust any given component of our running estimate $\betaivsub{[t]}$.
\begin{corollary}\label{cor:components}
  Let $(e_i)_{i\in [\dimX]}$ be the standard basis of $\bR^{\dimX}$.
  In the setting of \cref{prop:combinedestimate}, when $\V_{{\alpha}_{[t]}} \V_{{\alpha}_{[t]}}^T e_i = e_i$, then $(\betaivsub{[t]})_i$ consistently estimates $\beta_i$.
\end{corollary}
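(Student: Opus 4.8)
The plan is to combine the consistency statement from \cref{prop:combinedestimate} with the self-adjointness of orthogonal projections via the continuous mapping theorem. First I would recall that \cref{prop:combinedestimate} gives $\betaivsub{[t]} \overset{p}{\longrightarrow} \P_{\alpha_{[t]}}\beta$, where $\P_{\alpha_{[t]}} = \V_{\alpha_{[t]}}\V_{\alpha_{[t]}}^T$ is the orthogonal projection onto the combined instrumented subspace $\im(\alpha_{[t]})$. Since extracting the $i$-th coordinate, $\gamma \mapsto e_i^T\gamma$, is a continuous (indeed linear) map, the continuous mapping theorem yields $(\betaivsub{[t]})_i = e_i^T\betaivsub{[t]} \overset{p}{\longrightarrow} e_i^T\P_{\alpha_{[t]}}\beta$. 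Note that this step only uses consistency of the full vector and not the asymptotic-normality machinery of \cref{prop:identifyprojection}.

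The key step is then to rewrite this limit using the fact that an orthogonal projection is symmetric and idempotent. Because $\P_{\alpha_{[t]}}^T = \P_{\alpha_{[t]}}$, I would move the projection onto the basis vector: $e_i^T\P_{\alpha_{[t]}}\beta = (\P_{\alpha_{[t]}}e_i)^T\beta$. The hypothesis $\V_{\alpha_{[t]}}\V_{\alpha_{[t]}}^T e_i = e_i$ says precisely that $\P_{\alpha_{[t]}}e_i = e_i$, i.e., that $e_i$ lies in the instrumented subspace. Substituting this in gives $(\P_{\alpha_{[t]}}e_i)^T\beta = e_i^T\beta = \beta_i$, and hence $(\betaivsub{[t]})_i \overset{p}{\longrightarrow} \beta_i$, which is the claim.

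There is essentially no analytic obstacle here; the entire content is the one-line observation that $\P_{\alpha_{[t]}}$ is self-adjoint, so a coordinate of $\P_{\alpha_{[t]}}\beta$ coincides with the true coordinate $\beta_i$ exactly when the projection fixes $e_i$. The only point requiring mild care is invoking the correct mode of convergence and ensuring the condition is stated in terms of the population projection $\P_{\alpha_{[t]}}$ (no hats), which is precisely the object appearing in the limit of \cref{prop:combinedestimate}. For interpretive completeness it is worth remarking that the converse direction genuinely fails: a component $\beta_i$ need not be recoverable unless $e_i \in \im(\alpha_{[t]})$, in line with the earlier observation that $(\P_{\alpha}\beta)_i = 0$ does not imply $\beta_i = 0$.
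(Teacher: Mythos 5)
Your proof is correct and follows essentially the same route as the paper's one-line argument: consistency from \cref{prop:combinedestimate} plus the elementary linear algebra of orthogonal projections. You merely make explicit the self-adjointness step $e_i^T\P_{\alpha_{[t]}}\beta = (\P_{\alpha_{[t]}}e_i)^T\beta$ that the paper's appeal to ``linearity of projections'' (writing $\beta = \sum_i \beta_i e_i$) leaves implicit, which is a reasonable bit of added rigor.
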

\begin{proof}
   This follows from \cref{prop:combinedestimate} and the linearity of projections when considering $\beta = \sum_{i=1}^{\dimX} \beta_i e_i$.
\end{proof}
We note that we already compute $\V_{\hat{\alpha}_t}$ at each round as part of $\betaivsub{t}$.
Since $\V_{\hat{\alpha}_{[t]}} \V_{\hat{\alpha}_{[t]}}^T \overset{p}{\longrightarrow} \P_{\alpha_{[t]}}$, the check for identified components in \cref{cor:components} can therefore be performed efficiently in practice by checking for (approximate) equality of $\V_{\hat{\alpha}_{[t]}} \V_{\hat{\alpha}_{[t]}}^T e_i \approx e_i$.
In our empirical evaluation, we use the absolute value of the cosine similarity, denoted by \texttt{cdist}, between the two vectors as a continuous measure for whether $\beta_i$ has been identified.
As an aggregate metric, we report the percentage of identified components
\begin{equation}\label{eq:identifiedcomps}
  \tfrac{1}{\dimX} \left|\{i \in [\dimX] \given \texttt{cdist}(V_{\alpha[t]} V_{\alpha[t]}^T e_i, e_i) < \delta\}\right|
\end{equation}
for some fixed tolerance $\delta > 0$.

Finally, we can estimate an upper bound on the absolute error in each non-identified component.
Let $\beta = \P_{\alpha}\beta + \nu$ be the orthogonal decomposition of $\beta$ into the instrumented subspace $\im(\alpha)$ and its orthogonal complement.
Since $\|\beta\|_2^2 = \|\P_{\alpha}\beta\|_2^2 + \|\nu\|_2^2$ and $|\nu_i| \le \|\nu\|_2$ for all components $i \in [\dimX]$, we have
\begin{equation}
    |\nu_i | \le \sqrt{\|\beta \|_2^2 - \| \P_{\alpha}\beta \|_2^2} \: \text{ for all } i \in [\dimX]\:.
\end{equation}
With our consistent estimates $\betahat{}$ and $\betaivsub{}$, we can thus upper bound all remaining unidentified components.

We return now to \cref{fig:graphical_illustration2} which illustrates the estimation (upper panel) and combination steps (lower panel) in our linear setting for $\dimZ=2, \dimX=3, \nexperiments=2$.
For both instruments, we estimate the projection of $\beta$ onto the respective instrumented spaces ($P_{\alpha_{1}}\beta$ and $P_{\alpha_{2}}\beta$), including the effect $P_{\alpha_{1/2}}\beta$ which is the one that should be recovered.
In the lower panel, the two planes are the orthogonal complements of the instrumented spaces and their intersection corresponds to all vectors that are compatible, i.e., would be projected onto $\im(\alpha_{1})$ and $\im(\alpha_{2})$ respectively.
This corresponds to the constraints in \cref{eq:underiv_combinedestimate_optimizationprob}.
Among those, we then select the vector with the smallest norm as our combined estimate. 
The figure both illustrates the necessity of linearity for the combination of estimates and the increasing norm of the combined estimate converging to $\|\beta\|$.

\section{Sequential Selection of Instruments}
\label{sec:sequential}

At each step $t \in [\nexperiments]$, we seek to select the most informative subset of instruments $\alpha_t$ out of the pool of $\ninstruments$ available choices.
After each round we combine the newly obtained estimate $\betaivsub{t}$ with all previous ones to obtain $\betaivsub{[t]}$.
Regarding consideration (B3), we assume a cost function $\cost: [\ninstruments] \to \bR_{\ge 0}$, where $\cost(\dimZ)$ is the cost of running a single experiment with $\dimZ$ randomized instruments.
For instance, this may be the actual monetary and logistic cost of randomly administering the selected drugs to a collection of $n$ cell cultures ($\Z$), sequencing the cells ($\X$), and measuring the outcome of interest ($\y$) for each culture.
However, the cost may also incorporate a hard limit $\nperround$ on the number of instruments that can sensibly be combined in a single experiment (e.g., without killing the organism), by setting $c(d) = \infty$ for $d > \nperround$.

In light of \cref{eq:stopping}, the ultimate goal is to select instruments that maximize $\|\betaivsub{[\nexperiments]}\|_2$.
However, in round $t$ we cannot anticipate by how much $\|\betaivsub{[t-1]}\|_2$ is going to increase for a candidate set of instruments $\alpha_t$ without actually performing the experiment.
Therefore, we must rely on another signal to sequentially select subsets of instruments.
Without any information about the available instruments, we cannot do better than selecting instruments (uniformly) at random at each step.
While in practice one may not have precise information about the actual effects of individual instruments, information about the similarity of different antibiotics or drugs is typically still available.
For example, certain antibiotics may have related active agents (high similarity) or certain drugs may target similar pathways (high similarity).
We assume that pairwise normalized similarities $\similarity_{i,j} \in [0,1]$ are provided for all available instruments $i,j \in [\ninstruments]$.
Here, $\similarity_{i,j} = \similarity_{j, i}$ and $\similarity_{i,i} = 1$.
Such similarities could also be derived from a set of features known about the instruments.
To optimally explore the treatment space, it is then natural to attempt to sequentially select highly dissimilar instruments.

Hence, to evaluate the expected gain of adding a new set of instruments $\cI \subset [\ninstruments]$ to the already used instruments $\cJ \subset [\ninstruments]$, we define the following gain function
\begin{align}
    \gain:
    \; & 2^{[\ninstruments]} \times 2^{[\ninstruments]} \to \bR_{\ge 0}\:,\\
    & (\cI, \cJ) \mapsto \frac{1}{|\cI|+|\cJ| - 1} \sum_{i \in \cI} \sum_{j \in \cI \cup \cJ} (1 - \similarity_{i,j})\:.\nonumber
\end{align}
This takes into account the similarities of the newly proposed instrument set within itself as well as with respect to the previously used ones.
For example, when all instruments are maximally dissimilar ($\similarity_{i,j} = \delta_{i,j}$), $\gain(\cI, \cJ) = |\cI|$ regardless of $\cJ$.
When all instruments are equal ($\similarity_{i,j} = 1$), $\gain(\cI, \cJ) = 0$ for all inputs.
Finally, we define the score of the set $\cI$ when the set $\cJ$ has already been used by 
\begin{equation}\label{eq:score}
  \score(\cI, \cJ) := \gain(\cI, \cJ) - \cost(|\cI|)\:.
\end{equation}
These considerations lead to the following setting.
At each round $t$, we select a subset $\cI_t \subset [\ninstruments] \setminus \cI_{[t-1]}$ of still unused instruments that maximize the score function given the already used instruments.
We run a randomized experiment with those instruments to collect data, estimate (a projection of) $\beta$ from this data (\cref{prop:identifyprojection}), and combine the estimate with the previous ones (\cref{prop:combinedestimate}).
By convention, we have $\cI_{[0]} = \cI_{\emptyset} = \emptyset$ and we overload terminology to call both $\alpha_t \in \bR^{|\cI_t| \times \dimX}$ and $\cI_t \subset [\ninstruments]$ ``the instruments selected at round $t$''.
We similarly use $\alpha_{[t]}$ and $\cI_{[t]}$ for the instruments selected up to (and including) round $t$.
When the stopping criterion (\cref{eq:stopping} is satisfied, we return our current estimate as an estimate of the full $\beta$.
Otherwise, we return the estimate after $T$ experiments together with the identified components (\cref{cor:components}).
We outline our sequential instrument selection (SIS) procedure in \cref{alg:cap}.

We remark that since $\gain(\cI, \cJ) \le |\cI|$ it makes sense to use a sublinear cost function (until a potential hard limit $\nperround$).
Intuitively, while it becomes more expensive to include multiple randomized variables in a single experiment, it is still cheaper than running an individual randomized experiment for each instrument separately.
The precise choice of the cost function is informed by the actual experimental setting and determines the trade-off between randomizing many instruments at once (to increase the dimensionality of the instrumented subspace) and the cost of doing so.

\begin{algorithm}
\caption{Sequential selection of instrument sets}\label{alg:cap}
\begin{algorithmic}[1]
\Require maximum rounds $\nexperiments$, pairwise similarities $\similarity \in [0,1]^{\ninstruments \times \ninstruments}$, cost function $\cost$, tolerance $\epsilon > 0$
\State collect observational data $\X, \y$
\State compute $\betahat$ from $\X, \y$ \Comment{\citet[ConCorr]{janzing2019causal}}
\State $\cC \gets \emptyset$ \Comment{set of identified components}
\For{$t \in [\nexperiments]$} \Comment{experimental rounds}
  \State $\cI_t \gets \argmax_{\cI \subset [\ninstruments] \setminus \cI_{[t-1]}} \score(\cI, \cI_{[t-1]})$
  \State collect $\Z_t, \X_t, \y_t$ \Comment{run experiment with $\cI_t$}
  \State $\betaivsub{t} \gets (\X_t^T \P_{\Z_t} \X_t)^{+} \X_t^T \P_{\Z_t} \y_t$ \Comment{\cref{prop:identifyprojection}}
  \State $\betaivsub{[t]} \gets \argmin_{\gamma \in \bR^{\dimX}} \|\gamma\|_2$ \Comment{\cref{prop:combinedestimate}}
  \Statex \hspace{2cm} s.t.\ \:$\betaivsub{\tau} = \V_{\hat{\alpha}_\tau}\V_{\hat{\alpha}_\tau}^T \gamma$ for all $\tau \in [t]$
  \If{$| \|\betaivsub{[t]} - \betahat{} | < \epsilon$} \Comment{fully identified, \eqref{eq:stopping}}
    \State $\cC \gets [\dimX]$
    \State \textbf{return} $\betaivsub{[t]}, \cC$
  \EndIf
  \State $\cC \gets \{i \in [\dimX] \given \V_{\hat{\alpha}_{[t]}}\V_{\hat{\alpha}_{[t]}}^T e_i \approx e_i\}$ \Comment{\cref{cor:components}}
\EndFor
\State \textbf{return} $\betaivsub{[\nexperiments]}, \cC$ \Comment{estimate, identified components}
\end{algorithmic}
\end{algorithm}

\section{Empirical Evaluation}\label{sec:experiments}

\xhdr{Setup}
Since a real-world evaluation of our approach would require access to sequential randomized experimentation in a complex setting, we are restricted to simulation studies.
We first illustrate the properties of our proposed (combined) causal effect estimators in the underspecified IV setting and then evaluate our full sequential instrument selection method.
We generate the parameters $\alpha$, $\beta$ randomly (\cref{app:experiment_details}) in order to avoid parameter selection bias.
Next, we fix a mixing matrix $\M \in \bR^{\dimX \times \dimX}$ with all entries sampled from independent standard Gaussians as well as a direction $v \in \bR^{\dimX}$ as a uniform sample from the sphere $\mathbb{S}^{\dimX-1}$.
In \cref{eq:setting}, we then sample instrument components independently from a Rademacher distribution, and set $\epsilon_X = \M e$, $\epsilon_Y = v^T e$ with all components of $e \in \bR^{\dimX}$ sampled independently from standard Gaussians for each sample.
This confounding model satisfies the assumptions required to estimate $\betahat{}$ \citep{janzing2018detecting,janzing2019causal,rendsburg2022consistent}.
Loosely speaking, the confounder affects both the treatment and the outcome as (independent) random mixtures of the same independent noise sources.
We ensure in our data generation that $\beta \in \im(\alpha)$.
Hence, $\beta$ can be recovered fully in principle.
Moreover, we also guarantee that there is a subset of less than $\nperround \cdot \nexperiments$ instruments that suffices to fully identify $\beta$ (see \cref{app:experiment_details} for details).
Therefore, even if we cannot use all instruments throughout the experiments ($\nperround \cdot \nexperiments < \ninstruments$) a good selection algorithm can in principle fully identify $\beta$.

For the similarities between instruments, in our experiments we take the absolute value of the cosine distance $\similarity_{i,j} = |\alpha_i^T \alpha_j| / (\|\alpha_i\|_2 \|\alpha_j\|_2)$.
Crucially, we thereby do not assume access to $\alpha$---only the similarities enter the algorithm.
While in a given application, the similarities would be provided by practitioners and domain experts, in our simulated study we have to choose \emph{some} similarity measure that is informed by $\alpha$.
We account for uncertainties in the provided similarities by computing them on a noisy version of $\alpha$, where we add independent standard Gaussian noise to all entries.
For the cost function we choose $c(d) = \log(d)$ for $d \le \nperround$ and $c(d) = \infty$ otherwise, effectively limiting the maximum number of instruments per round to $\nperround$.
While pairwise similarities can help the sequential selection to converge quickly, we note that our findings from \cref{sec:underidentified} are extremely useful for the (often strong) baseline of selecting instruments randomly.

\begin{figure}
    \centering
    \includegraphics[width=0.95\linewidth]{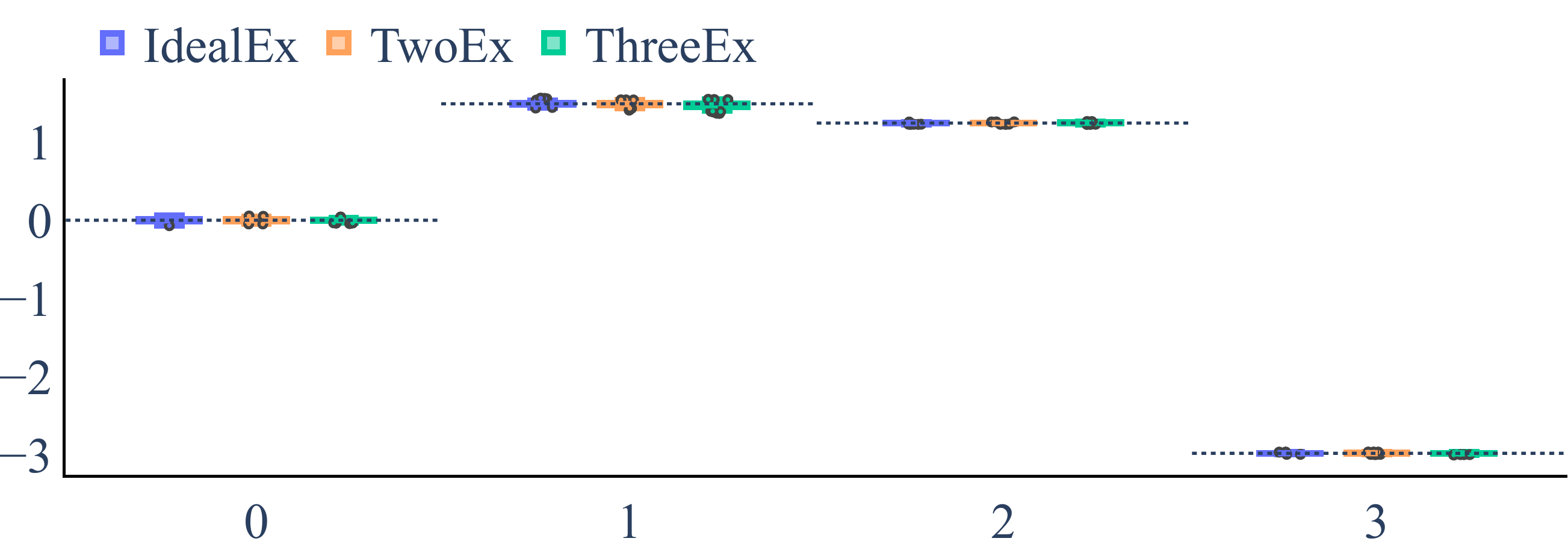} 
    \includegraphics[width=0.95\linewidth]{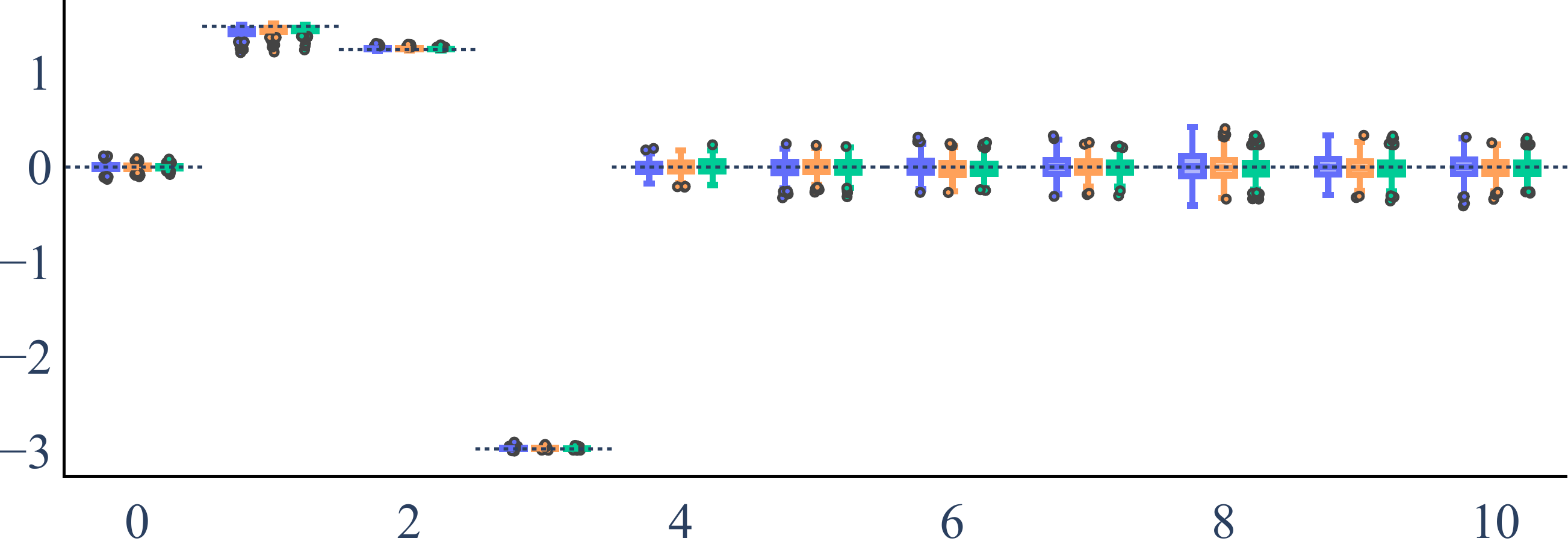} 
    \vspace{-3mm}
    \caption{Estimates of $\beta$ over 500 runs with $\dimZ = \dimX = 3$ (top) and $\dimZ=3$ and $ \dimX = 10$ (bottom). The dotted lines are the true $\beta$ and boxplots show the median (horizontal line), first and third quartile (box height) and 10/90 percentiles (whiskers).
    }\label{fig:finitesample}
    \vspace{-5mm}
\end{figure}

\begin{figure*}
    \centering
    \includegraphics[width=.48\linewidth]{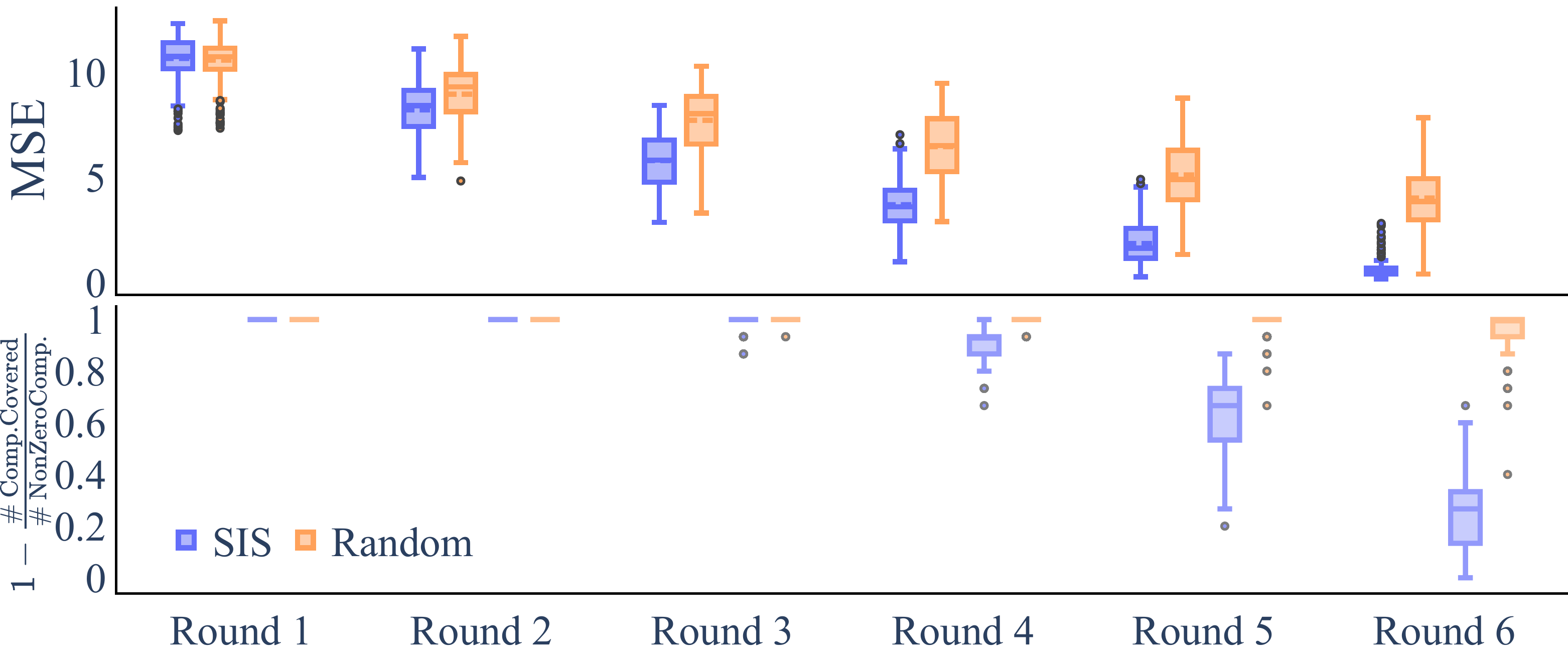} 
    \hfill
    \includegraphics[width=.48\linewidth]{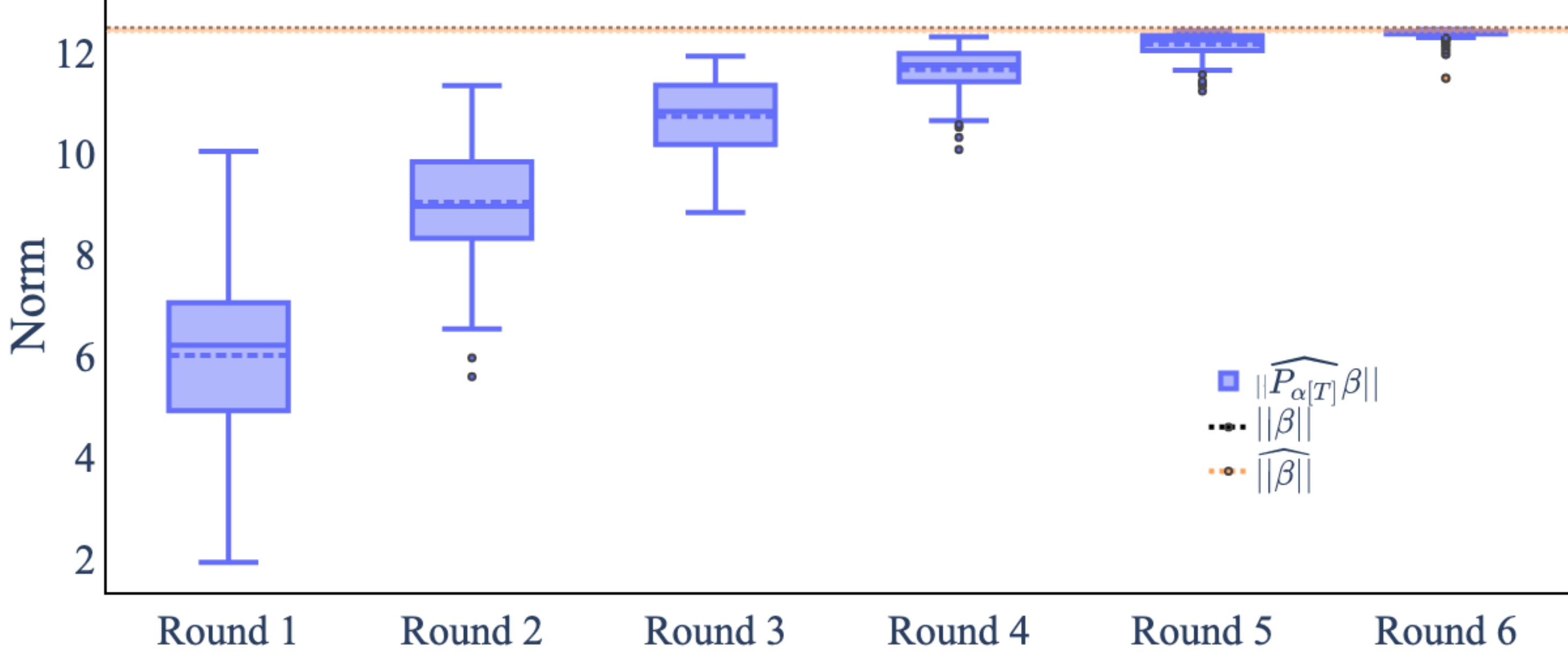} 
    \vspace{-3mm}
    \caption{Results for the sequential selection of instruments with $\dimZ=30, d_{\text{id}}=15, \nperround = 3, \nexperiments=6$ and $\dimX=50$: each boxplot shows the median and mean (solid resp.\ dashed line), first and third quartile (box height) and 10/90 percentiles (whiskers) over $n_\text{runs}=250$. \textbf{Left:} The upper panel shows the of squared error $\betaivsub{[t]} - \beta$ over rounds $t \in \{1, \dots, 6\}$. The lower panel shows the corresponding percentages of unidentified components.
    \textbf{Right:} The $||\widehat{\P_{\alpha[t]}\beta}||$ increases for $t \in \{1, \dots, 6\}$ approaching $\widehat{||\beta||}$ (dotted orange line), which perfectly estimates the true $\|\beta\|$ in this case (dotted black line).} \label{fig:seq_opt_results}
\end{figure*}

\begin{figure*}
    \centering
    \includegraphics[width=0.98\linewidth]{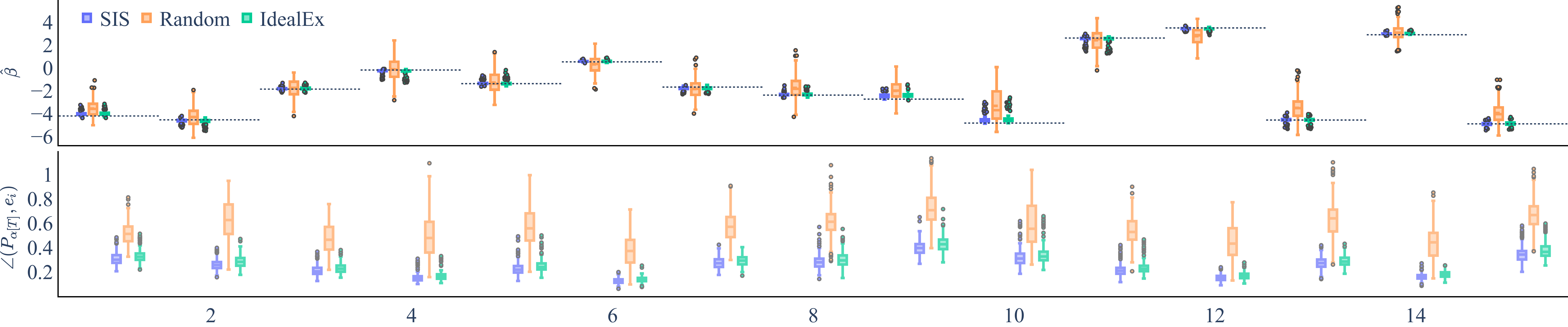} 
    \vspace{-3mm}
    \caption{Results for the sequential selection of instruments with $\dimZ=30, d_{\text{id}}=15, \nperround = 3, \nexperiments=6$ and $\dimX=50$: each boxplot shows the median and mean (solid resp. dotted line), first and third quartile (box height) and 10/90 percentiles (whiskers) over $n_\text{runs}=250$ after the last round $T=6$. \textbf{Top:} estimates of the nonzero components of $\beta$ with the black dotted line being ground truth. \textbf{Bottom:} distribution of cosine similarities in \cref{eq:identifiedcomps} for each component.} \label{fig:lastround_opt_results}
    \vspace{-2mm}
\end{figure*}

\xhdr{Finite sample properties of our estimators}
We first empirically analyze the finite sample behavior of our estimators from \cref{prop:identifyprojection,prop:combinedestimate}.
For illustration, \cref{fig:finitesample} shows a setting with $\dimX=\dimZ=3$ on the upper panel and $\dimX=10$, $\dimZ=3$ on the lower panel.
The x-axis shows component indices of $\beta \in \bR^{\dimX}$ (component 0 is the offset); dotted lines are the ground truth values of $\beta$, and boxplots show distribution of estimated values ($\betaivsub{}$) over 500 random seeds.
We compare three different estimators.
\emph{IdealEx} randomizes all three available instruments simultaneously and shows the single estimate (from \cref{prop:identifyprojection}).
\emph{TwoEx} combines (via \cref{prop:combinedestimate}) two individual estimates using only the first two and the last instrument, respectively.
\emph{ThreeEx} combines three individuals estimates, obtained by randomizing each of them separately.
\Cref{fig:finitesample} shows that all methods correctly identify all three components of $\beta$ (plus the offset) on average with low variance in the low-dimensional $\dimX=3$ setting.
The $\dimX=10$ setting (lower panel) in which 3 instruments suffice for full identification, shows similar results for all components.
Even though the variance increases compared to the lower-dimensional setting, estimates are still consistent.
All details are provided in \cref{app:experiment_details}.

\xhdr{Sequential instrument selection}
For our instrument selection algorithm we compare the following methods.
\emph{IdealEx}: the hypothetical ideal where all instruments are used at once in a single experiment. \emph{Random}: a baseline that selects one of the allowed (smaller or equal $\nperround$) subsets uniformly at random from the remaining instruments at each round. \emph{Sequential Instrument Selector (SIS)}: our \cref{alg:cap}. 
We remark that the random baseline, to its advantage, does not respect the cost per experiment, but is allowed to select the maximum number of possible instruments in each round.

We set $\dimZ=30, d_{\text{id}}=15$ and use two different treatment dimensions $\dimX=50$ and $\dimX=150$.
Further, we allow $\nperround = 3$ instruments per round with a total budget of $\nexperiments = 6$ experimental rounds.
Note that $\nperround \cdot \nexperiments \leq d_{\text{id}}$, i.e., the algorithm has the budget to fully identify $\beta$.

In \cref{fig:seq_opt_results} (\textit{left}) we show boxplots of the mean squared error (MSE) of our estimates for the nonzero components after each optimization round.
\textit{SIS} outperforms the random baseline in terms of MSE.
Additionally, each boxplot in the lower panel shows the percentage of uncertain components from \cref{eq:identifiedcomps} for $\delta = 0.3$.
In \cref{fig:seq_opt_results} (\textit{right}), the norm of our estimator $||\betaivsub{}||$ converges towards the norm of the actual $\beta$ (and its estimate $\betahat{}$), i.e., the stopping criterion is reached.
\Cref{fig:lastround_opt_results} compares these estimates for the nonzero components of $\beta$ after the last round of experiments.
Our greedy optimization indeed identifies each of the components just as well as the hypothetical ideal of a single experiment that uses all instruments at once.
Empirically, the finite sample properties remain unaffected by our combination procedure.
We refer to \cref{app:experiment_additional} for results on a $\dimX=150$ setting.

\section{Conclusion}

In this work we made multiple contributions aiming at inferring causal effects of high-dimensional treatments under unobserved confounding by sequential experimentation, where we cannot intervene on the treatments directly, but can only randomize instruments.
We proposed consistent, asymptotically normal estimators for the orthogonal projection of a treatment effect onto the instrumented subspace in the linear setting and introduced a method to consistently combine such estimates from separate experiments.
Surprisingly, neither the (perhaps intuitive) estimator in \cref{prop:identifyprojection}, nor the geometric intuition around instrumented subspaces in the underspecified setting can be found in existing literature.
These estimators may be of independent interest as a contribution to the largely ignored underspecified IV setting.
We then developed an algorithm to sequentially propose subsets of instruments from a given pool that flexibly trades off the expected information gain (informed by provided similarities) with the cost of each experiment.
Moreover, we integrated a stopping criterion for when the sequential selection has fully identified the causal effect, a method to keep track of all components that are consistently estimated, and an upper bound on the absolute error of unidentified components: these additions inform the practitioner about whether (and which parts) of the estimate can be trusted.

The linearity assumption may appear restrictive.
However, the linear IV setting is still heavily used in econometrics and health, as it reliably captures dominant effects even in noisy settings, and still attracts attention with novel results recently \citep{pfister2022identifiability,rothenhausler2018anchor}.
The thorough theoretical understanding developed in this work is a challenging and necessary foundation for experiment design via instrument selection.
Extensions of our method and of our notion of the instrumented subspace to (certain) non-linear settings is an important direction for future work.
A second limitation of our work is inherent to the problem setting: missing real-world experiments due to a lack of access to the required expensive, specialized facilities.
Finally, we highlight that independent testing and verification is paramount when using algorithmically obtained insights to inform consequential decisions such as actual clinical treatment decisions.

\xhdr{Code}
The implementation as well as experimental details are publicly available on Github: 
\href{https://github.com/EAiler/underspecified-iv}{https://github.com/EAiler/underspecified-iv}.

\xhdr{Acknowledgments}
We would like to thank Sören Becker for useful suggestions and Dominik Janzing for fruitful discussions on estimating $\|\beta\|_2$.
EA is supported by the Helmholtz Association under the joint research school ``Munich School for Data Science - MUDS''.
JH is supported by the Natural Sciences and Engineering Research Council of Canada (NSERC) and Recursion Pharmaceuticals.
This work has been supported by the Helmholtz Association’s Initiative and Networking Fund through CausalCellDynamics (grant \# Interlabs-0029).

\bibliography{bib}
\bibliographystyle{icml2023}

\newpage
\appendix
\onecolumn
\newpage

\section{Details of Experiments} \label{app:experiment_details}

\paragraph{Data Generation}

The data generation process follows the model described in \cite{janzing2018detecting}.
We adopt their data setup in order to estimate $\widehat{||\beta||}$ consistently:
\begin{equation}
  X = Z \alpha + \epsilon_X \:, \qquad
  Y = X \beta + \epsilon_Y\:,
\end{equation}
$\epsilon_X$ and $\epsilon_Y$ are confounded via the common variable $e$.
\begin{equation}
  \epsilon_X = eM \:, \qquad
  \epsilon_Y = ev\:,
\end{equation}
with $Z \sim \text{Rademacher}(0.5)$ and $e \sim \mathcal{N}(0, Id_l)$.

\paragraph{Scenario Generation.}

For each figure, i.e. simulation, we generate random scenarios in order to avoid introducing involuntary bias in the parameter setting.

Each generated scenario is based on a seed and the constants $n, \dimX, \dimZ$ and $d_\text{id}$, i.e. the number of instruments it will take to identify the causal effect in full.
Further, we assume $\dimX=l$, i.e. $M \in \bR^{\dimX \times \dimX}$. Note that this choice is based on the setting in the simulation studies of \cite{janzing2018detecting}.

We sample $d_\text{id}$ nonzero components of $\alpha_j, j \in \{1, ..., d\}$ and the $d_\text{id}$ nonzero components of $\beta$ from a uniformly distributed random variable $\mathcal{U}(-5.0, 5.0)$.
The sparsity is introduced by setting the remaining components to $0.0$. 
Our motivation for this choice is to reliably generate a setup for which we are guaranteed to identify the causal effect by a maximum of $d_\text{id}$ instruments.
In addition to $d_\text{id}$ identifying instruments, we generate $d-d_\text{id}$ further instruments by picking two of the necessary instruments on top of which we add a $p$-dimensional Gaussian noise.
Overall, we end up with $d_\text{id} - 2$ necessary instruments and two clusters from which we can pick any instrument in order to identify the remaining components of $\beta$.
For the confounder we sample all entries of $\M \in \bR^{\dimX \times \dimX}$ from independent standard Gaussians and the direction $v \in \bR^{\dimX}$ as a uniform sample from the sphere $\mathbb{S}^{\dimX-1}$.

\paragraph{Scenario Generation for \cref{fig:finitesample}.}
For showcasing the finite sample properties of our estimator, we sampled from the scenario generation above with seed $253$ and $\dimZ=3$ and $\dimX=3$ resp. $\dimX=10$.
This choice of parameters left us with two settings (1) being just-identified $\dimZ=\dimX = 3$ and (2) being underspecified $\dimZ = 3, \dimX = 10$.
Moreover, we set $\beta$ to only have $3$ non-zero components in order to be able to identify the causal effect in full. Note that this choice is for illustration purpose and does not affect the method's applicability in a setting where we can only identify parts of the causal effect. However, in those setting where some $\beta_i$-components are not part of the instrumented subspace $\im(\alpha)$, full causal recovery is in general impossible.

\section{Additional Experiments} \label{app:experiment_additional}

In the optimization we compare the baseline which uses all instruments at once (\textit{IdealEx}) to the developed sequential optimization routine (\textit{SIS}) and the random baseline.
We include results for the same setting as \cref{fig:seq_opt_results} and \cref{fig:lastround_opt_results}, except with an increased treatment dimension, i.e. $\dimX=150$ in \cref{appfig:opt_results_1}:
\begin{equation*}
    \dimX = 150, \;
    \dimZ = 30, \;
    d_\text{id} = 15, \;
    \nperround = 3, \;
    \nexperiments = 6 
\end{equation*}

\begin{figure*}
    \centering
    \includegraphics[width=.48\linewidth]{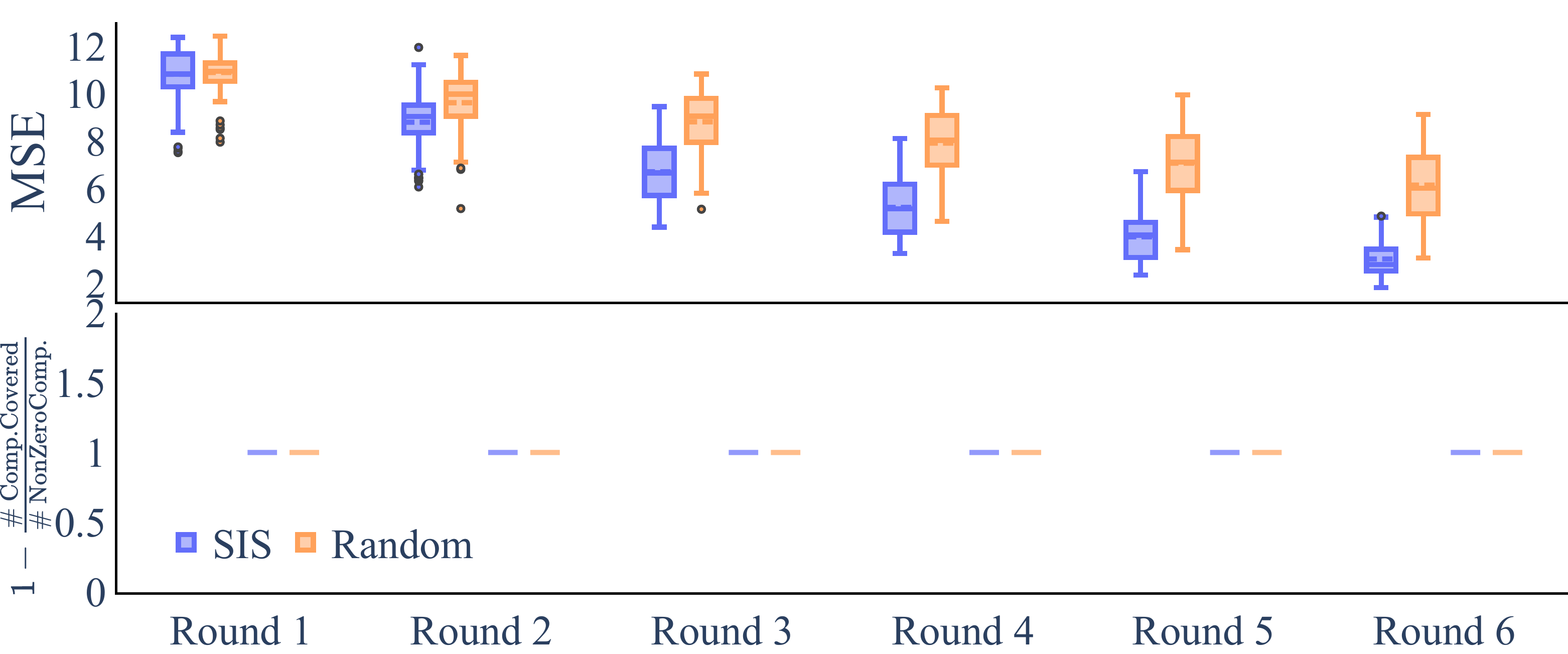} 
    \hfill
    \includegraphics[width=.48\linewidth]{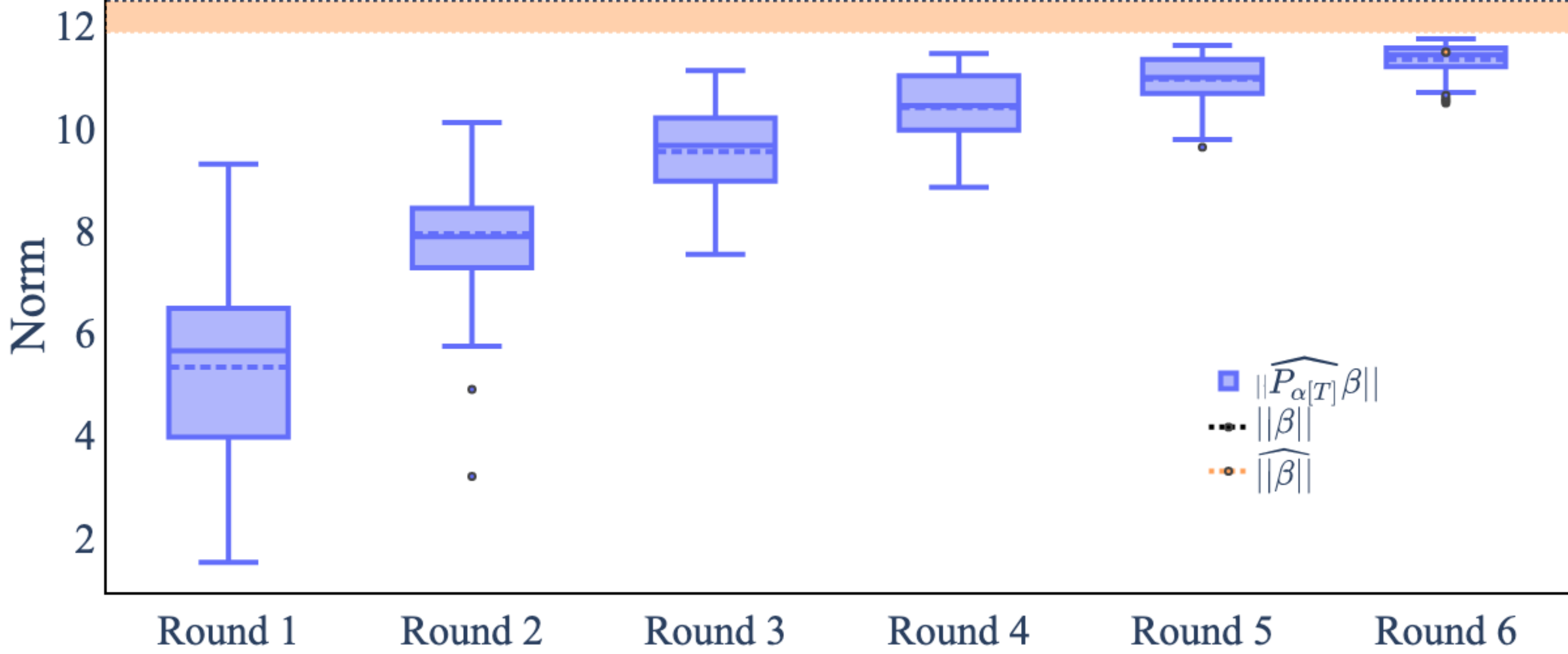}
    \includegraphics[width=0.98\linewidth]{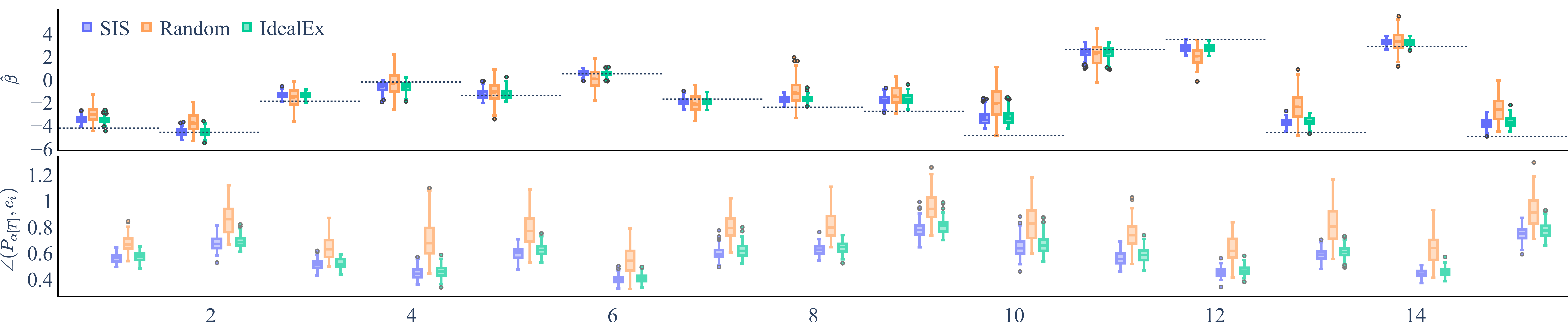} 
    \vspace{-3mm}
    
    \caption{Results for the sequential selection of instruments with $\dimZ=30, d_{\text{id}}=15, \nperround = 3, \nexperiments=6$ and $\dimX=150$: each boxplot shows the median and mean (solid resp.\ dashed line), first and third quartile (box height) and 10/90 percentiles (whiskers) over $n_\text{runs}=250$. \textbf{Top Left:} The upper panel shows the of squared error $\betaivsub{[t]} - \beta$ over rounds $t \in \{1, \dots, 6\}$. The lower panel shows the corresponding percentages of unidentified components. As we are in the scenario with $p=150$, we would need to adjust the threshold $\delta=0.3$ to a higher value.
    \textbf{Top Right:} The $||\widehat{\P_{\alpha[t]}\beta}||$ increases for $t \in \{1, \dots, 6\}$ approaching $\widehat{||\beta||}$ (shaded orange block), which does not perfectly estimate the true $\|\beta\|$ in this case (dotted black line), but performs still reasonably well.
    \textbf{Bottom:} \textit{Upper Panel:} estimates of the nonzero components of $\beta$ with the black dotted line being ground truth. \textit{Lower Panel:} distribution of cosine similarities in \cref{eq:identifiedcomps} for each component.
    } \label{appfig:opt_results_1}
\end{figure*}

Moreover, it might not always be the case that the stopping criterion in \cite{janzing2018detecting} works as nicely as in the previous scenarios, see Fig.~\cref{appfig:opt_results_2} with parameter setting:
\begin{equation*}
    \dimX = 50, \;
    \dimZ = 30, \;
    d_\text{id} = 20, \;
    \nperround = 4, \;
    \nexperiments = 6
\end{equation*}

\begin{figure*}
    \centering
    \includegraphics[width=.48\linewidth]{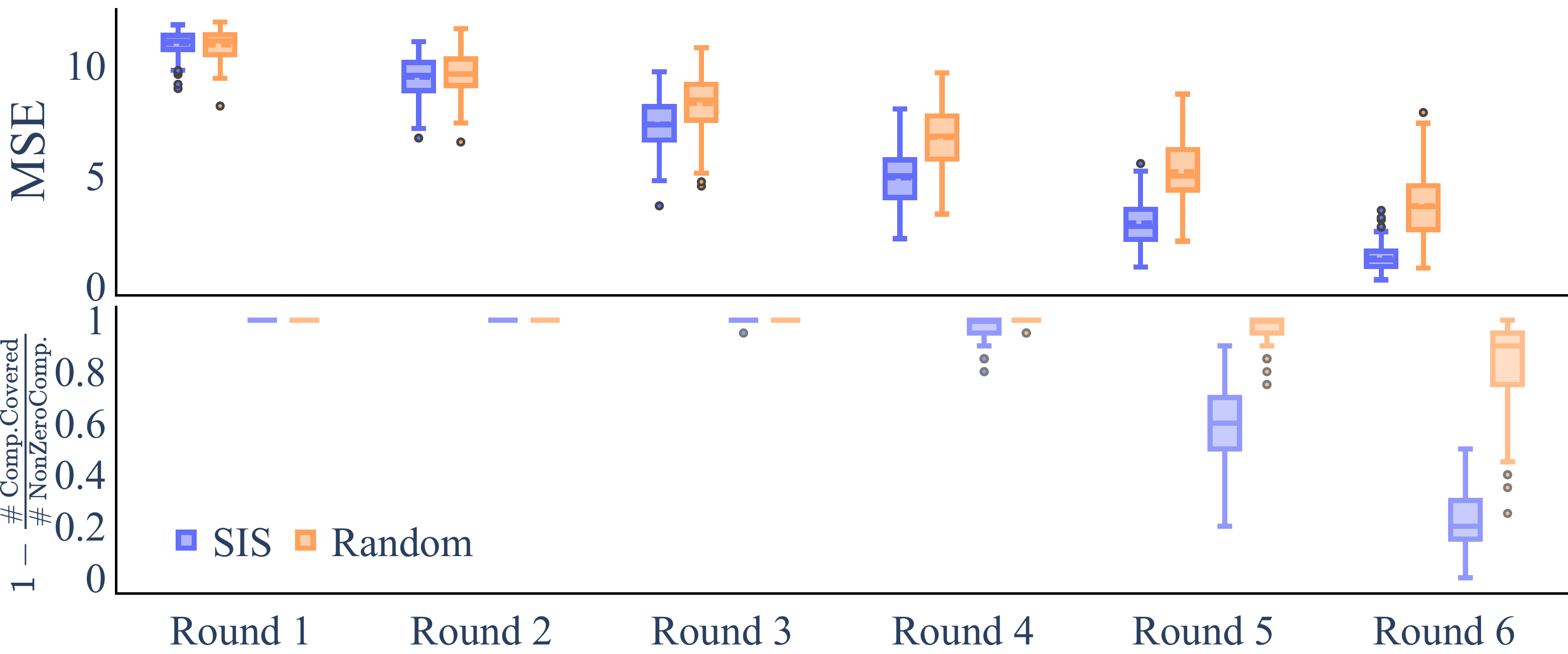} 
    \hfill
    \includegraphics[width=.48\linewidth]{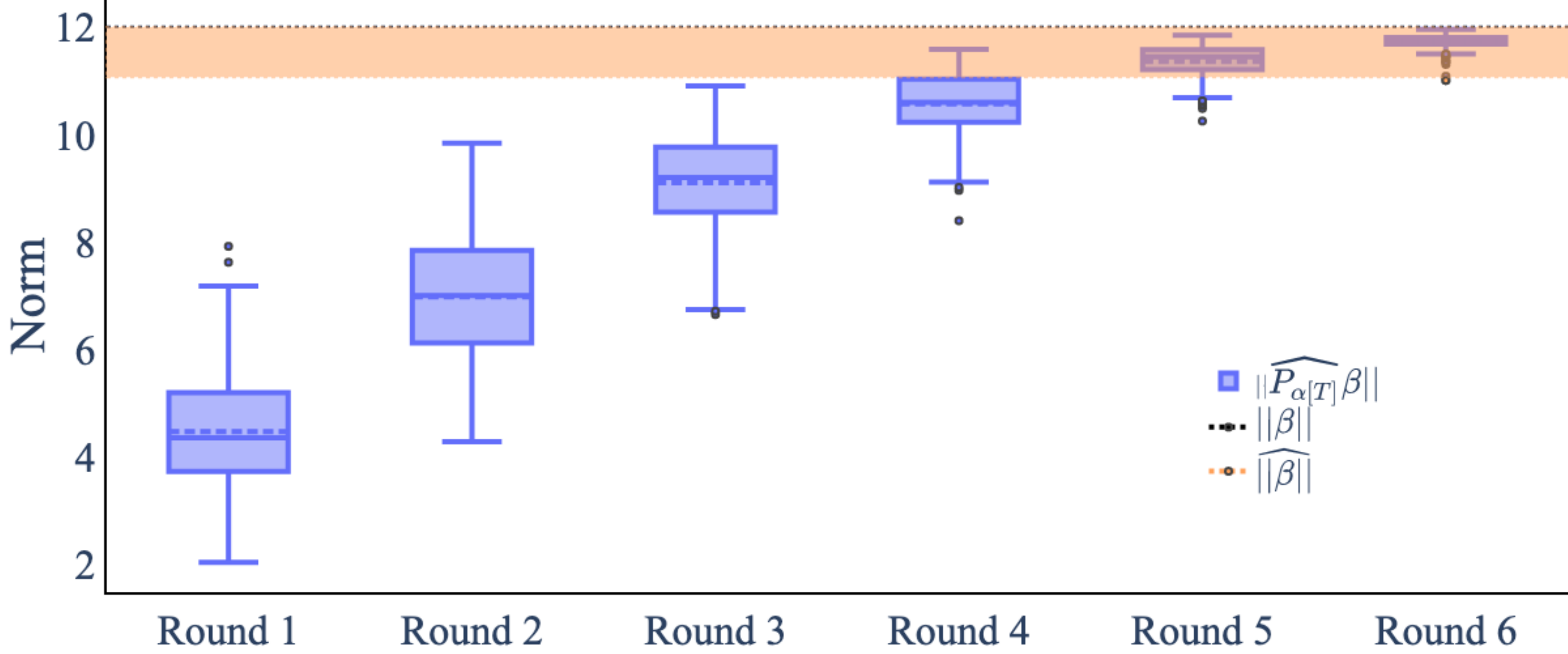}
    \includegraphics[width=0.98\linewidth]{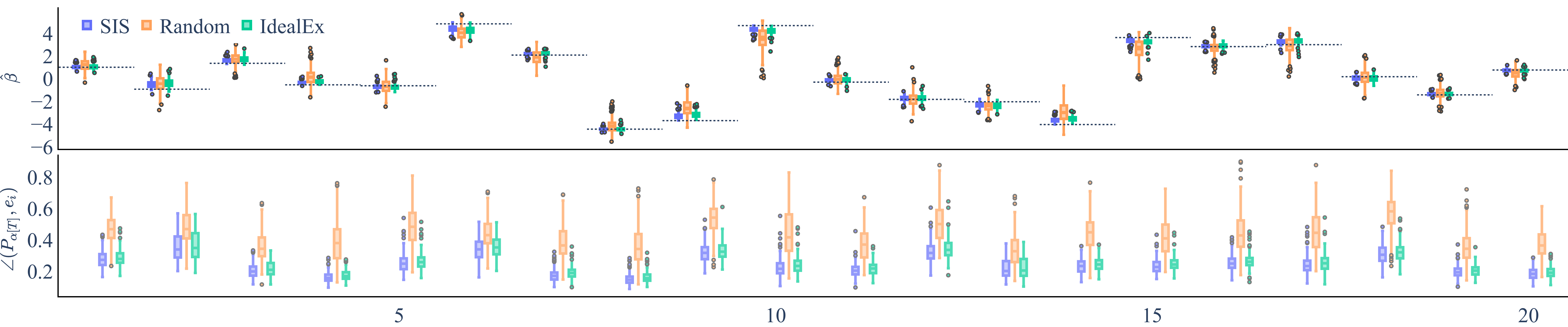} 
    \vspace{-3mm}
    
    \caption{Results for the sequential selection of instruments with $\dimZ=30, d_{\text{id}}=20, \nperround = 4, \nexperiments=6$ and $\dimX=50$: each boxplot shows the median and mean (solid resp.\ dashed line), first and third quartile (box height) and 10/90 percentiles (whiskers) over $n_\text{runs}=250$. \textbf{Top Left:} The upper panel shows the of squared error $\betaivsub{[t]} - \beta$ over rounds $t \in \{1, \dots, 6\}$. The lower panel shows the corresponding percentages of unidentified components.
    \textbf{Top Right:} The $||\widehat{\P_{\alpha[t]}\beta}||$ increases for $t \in \{1, \dots, 6\}$ approaching $\widehat{||\beta||}$ (shaded orange block), which does not perfectly estimate the true $\|\beta\|$ in this case (dotted black line), but performs still reasonably well.
    \textbf{Bottom:} \textit{Upper Panel:} estimates of the nonzero components of $\beta$ with the black dotted line being ground truth. \textit{Lower Panel:} distribution of cosine similarities in \cref{eq:identifiedcomps} for each component.
    } \label{appfig:opt_results_2}
\end{figure*}

\end{document}